\newtheorem{theorem}{Theorem}[section]
\newtheorem{lemma}[theorem]{Lemma}
\newtheorem{prop}[theorem]{Proposition}
\newtheorem{definition}{Definition}[section]
\newcommand{\id}{\mathbf{1}}
\newcommand{\Det}{\mathrm{Det}}
\newcommand{\e}{\mathrm{e}}
\icmltitlerunning{The Hintons in your Neural Network:
a Quantum Field Theory View of Deep Learning}
\begin{document}

\twocolumn[
\icmltitle{The Hintons in your Neural Network:\\
a Quantum Field Theory View of Deep Learning}




\begin{icmlauthorlist}
\icmlauthor{Roberto Bondesan}{qualcomm}
\icmlauthor{Max Welling}{qualcomm}
\end{icmlauthorlist}

\icmlaffiliation{qualcomm}{Qualcomm AI Research, Qualcomm Technologies Netherlands B.V. (Qualcomm AI Research is an initiative of Qualcomm Technologies, Inc.)}

\icmlcorrespondingauthor{Roberto Bondesan}{rbondesa@qti.qualcomm.com}

\icmlkeywords{Machine Learning, ICML}

\vskip 0.3in
]



\printAffiliationsAndNotice{}  

\begin{abstract}
In this work we develop a quantum field theory formalism for deep learning, where input signals are encoded in Gaussian states, a generalization of Gaussian processes which encode the agent's uncertainty about the input signal. We show how to represent linear and non-linear layers as unitary quantum gates, and interpret the fundamental excitations of the quantum model as particles, dubbed ``Hintons''.
On top of opening a new perspective and techniques for studying neural networks, the quantum formulation is well suited for optical quantum computing, and provides quantum deformations of neural networks that can be run efficiently on those devices.
Finally, we discuss a semi-classical limit of the quantum deformed models which is amenable to classical simulation.
\end{abstract}


\section{Introduction and Contributions}
\label{sec:introduction}

Since its inception more than 10 years ago, deep learning has achieved some stunning successes, transforming entire fields such as speech recognition, automated translation, computer vision and protein folding. It's impact on industry and the economy is so large that dedicated chips are being developed just to process neural network workloads. A natural question arises, what will drive the next wave of innovation in this field?

One candidate for the next disruption in AI is quantum computing. Quantum computing is in its nascent phase with only a modest number of qubits or qumodes (i.e.~states of light) available to perform computations. Moreover, it is not at all clear how to map neural networks onto a quantum computer in way that provide clear benefits. Yet, the confluence of deep learning and quantum computing holds promise and deserves exploration.

In this paper we propose a direct mapping of a deep neural network onto a optical quantum computer through the language of quantum field theory. We consider data, e.g. an image, as a finite sampling of an underlying continuous signal. We model this using quantum optical Gaussian states and show that they represent a generalization of a Gaussian process. We show how to compute the posterior of this Quantum Gaussian Process given input data. Then we show how both linear layers and nonlinear layers can be represented as unitary transformations on these Gaussian states. In the process we define a new quantum nonlinearity analogous to the classical softplus.

Because nonlinearities map us outside of the scope of Gaussian states we discuss a number of special cases which are tractable to simulate classically. The full architecture can only be simulated on an optical quantum computer.

We also provide details on how to map our model onto an optical quantum computer, including quantum Gaussian state preparation, and the implementation of linear and nonlinear layers in terms of elementary universal gates.  

While we have implemented and tested our tractable approximations we emphasize that our
contribution is theoretical; we do not expect that these tractable approximations perform better than their classical cousins. By describing deep learning in the language of quantum field theory we pave the way for the development of novel quantum neural network architectures in the future. We find the match between neural networks and quantum field theory very natural indeed. And perhaps amusingly, in the process of formulating deep learning in the language of quantum field theory we discovered a new particle: The ``Hinton'' is the elementary excitation of the quantum field from which optical quantum neural networks are made.



\section{Related Work}
\label{sec:related_work}

Several works have recently investigated quantum neural networks as a  variational quantum circuit using qubit architectures, e.g.~\cite{farhi_neven,baqprop, beer,qcnn,huggins,bondesan2020quantum}.
Quantum optical neural networks similar to those described in this work have been discussed in \cite{killoran2019,steinbrecher2019quantum, PhysRevLett.118.080501, PhysRevA.97.022315}.
Optical devices have also been considered for implementations of classical neural networks on photonic hardware due to the efficiency of matrix multiplication with optical instruments in \cite{shen2017deep}.
While the implementation of linear layers considered in all these works is the same as ours, our work differs from previous literature in the following two main aspects: 1) we employ Gaussian states to reason about uncertainty due to discretization errors; 2) we show how to implement quantum analogs of popular nonlinearities such as softplus using unitary gates.
Technical details of these differences will be discussed after we introduce our framework below.
Using Gaussian states for interpolating data that is further processed by a neural network is inspired by the recent works \cite{li2016scalable,finzi2020probabilistic}, which however are concerned only with classical networks.

Quantum algorithms for Gaussian processes have been studied in \cite{zhao2019quantum, Das_2018}. However, the focus there is speeding up GP regression and not devising a quantum neural network. Further, none of these works discusses the connection between quantum optical Gaussian states and Gaussian processes.



\section{Background}
\label{sec:background}


\subsection{Probabilistic Numeric Neural Networks}
\label{sec:pncnn}

We start by reviewing the framework of probabilistic numeric neural networks for classifying an input signal with missing data \cite{finzi2020probabilistic}. 
This model
uses a GP on a continuous space ${\cal X}=\mathbb{R}^D$ to interpolate the input signal and defines a neural network on this GP representation. 
To simplify the discussion we consider here a finite dimensional input space. This also allows us to present a simple numerical procedure to study the models, and does not change the conceptual findings of the present paper.
Note that our approach is also related to \cite{li2016scalable} but we shall keep the term probability numeric to emphasize the connection with the field of probabilistic numerics \cite{cockayne2019bayesian}.

The input to the model is ${\cal D} = \{ (x_i,y_i) \}_{i=1}^N$ corresponding to the observation $y_i$ of a signal (field) $\varphi : {\cal X}\to \mathbb{R}$ at location $x_i\in {\cal X}$. 
We assume an underlying finite input grid ${\cal X}=\{1,\dots, l \}^{\times d}$ and denote by $\bm{\varphi}$ the vector with components $\varphi_x$. We consider a prior GP with zero mean and kernel $k$, and compute the posterior ${\cal GP}(\mu',k')$ to interpolate the signal \cite{rasmussen2006gaussian}:
\begin{equation}
\label{eq:posterior}
\mu_x'=\bm{k}_x^\top A^{-1} \bm{y}
\,,
\quad
k'_{x,x'}
=
k_{x, x'}
-
\bm{k}_x^\top A^{-1}
\bm{k}_{x'}\,.
\end{equation}
Here $\bm{k}_x=\{ k_{x,x_i} \}_{i=1}^N$,
$A_{i,j} = k_{x_i, x_j} + \delta_{i,j} \sigma_n^2$, where 
$\sigma_n^2$ is measurement noise.
We then apply a sequence of linear layers and nonlinearities on the input random vector  $\bm{\varphi}^{(1)}\sim {\cal GP}(\mu',k')$:
\begin{equation}\label{eq:layer_recurrence}
    \bm{\varphi}^{(\ell+1)} = 
    \sigma(
    \mathcal{B}^{(\ell)}
    (\mathcal{A}^{(\ell)}\bm{\varphi}^{(\ell)})).
\end{equation}
In the intermediate layers, $\bm{\varphi}$ has shape $N_C\times |{\cal X}|$, with $N_C$ being the number of channels, and components $\varphi_{a,x}$.
$\sigma$ a point-wise non-linearity, ${\cal B}$ adds a bias, and $\mathcal{A}$ is linear.
In the case of translation equivariance, \cite{finzi2020probabilistic} chose
\begin{align}
    \label{eq:cal_A}
    \mathcal{A} &= \sum_{k}W_k{\rm e}^{\mathcal{D}_k}\,,
\end{align}
with $[W_k]^{ab}$ a matrix of parameters, where $a,b$ index the channels, and
\begin{align}
{\cal D}_k
=
\sum_{i_1,\dots,i_d\ge 0} 
\alpha_{i_1,\dots,i_d}
\partial_1^{i_1}\cdots \partial_d^{i_d}
\,,
\end{align}
where, denoting the unit vector in direction $\mu\in\{1,\dots, d\}$ by $e_\mu$, 
$(\partial_\mu \varphi)_x = \varphi_{x+e_\mu} - \varphi_{x}$ and $\alpha_{i_1,\dots,i_d}$ is constant. 

After $L$ transformations of type \eqref{eq:layer_recurrence}, the output feature undergoes a global average pooling, $\bm{\varphi}^{(L+1)}={\cal P} \bm{\varphi}^{(L)}$, 
where 
\begin{align}
    \label{eq:cl_pooling}
    ({\cal P}\bm{\varphi})_{a,x}
    =
    \begin{cases}
    \frac{1}{|{\cal X}|}
    \sum_{x\in {\cal X}} \varphi_{a,x}
    &\text{ first element} \\
    \varphi_{a,x} &\text{ else}
    \end{cases}
    \,.
\end{align}
This implements an invertible version of the usual operation. 
Finally, the first element of $\bm{\varphi}^{(L+1)}$ is passed through a linear layer that produces an output whose mean is interpreted as logits for classification.
Denoting ${\cal L} = \mathcal{B} \circ \mathcal{A}$, 
the chain of operations of the PNCNN is
\begin{align}
    \label{eq:pncnn_map}
    \Phi = 
    {\cal L}^{(L)}
    \circ 
    {\cal P}\circ 
    \sigma \circ \mathcal{L}^{(L-1)}
    \circ \cdots 
    \sigma \circ \mathcal{L}^{(0)}
    \,.
\end{align}


\subsection{Quantum Mechanics}
\label{sec:quantum}

In this section we review some basics of quantum mechanics, in particular quantum fields. We refer the reader to e.g.~\cite{QM_book} for a standard introduction.
In the quantum formalism we associate to every classical configuration of a random field $\bm{\varphi} = \{ \varphi_x \}_{x\in {\cal X}} \in \mathbb{R}^{|{\cal X}|}$ (for simplicity we ignore the channels in this section) a vector $\ket{\bm{\varphi}}$. The span of all these vectors 
is a vector space ${\cal H}$, with elements given by superpositions:
\begin{align}
\ket{\Psi} = \int D(\bm{\varphi}) ~\psi(\bm{\varphi}) \ket{\bm{\varphi}}\,,
\quad D(\bm{\varphi})=\prod_{x\in{\cal X}}\dd \varphi_x
\,.
\end{align}
Note that $\psi(\bm{\varphi})$ are the coefficients $\in \mathbb{C}$ that are used to combine vectors $\ket{\bm{\varphi}}$. The bra-ket notation might feel a little strange to readers unfamiliar with quantum mechanics, and is used to represent abstract elements in ${\cal H}$. 

${\cal H}$ is a Hilbert space equipped with a scalar product 
$\braket{\bm{\varphi}}{\bm{\varphi}'}=\delta(\bm{\varphi}-\bm{\varphi}')$, so that $\bra{\bm{\varphi}} \ket{\Psi}= \psi(\bm{\varphi})$.
We call a linear operator on ${\cal H}$ a quantum field (indicated with a hat). The algebra of quantum fields is generated by the pairs $\{ \widehat{\varphi}_x, \widehat{\pi}_x \}_{x\in {\cal X}}$:
\begin{align}
\bra{\bm{\varphi}} \widehat{\varphi}_x \ket{\Psi}
&=
\varphi_x \psi(\bm{\varphi})
\,,\\
\bra{\bm{\varphi}} \widehat{\pi}_x \ket{\Psi}
&=
-i \frac{\partial}{\partial \varphi_x}
\psi(\bm{\varphi})\,.    
\end{align}
They are self-adjoint and satisfy the canonical commutation relations 
\begin{align}
\label{eq:ccr_phi_phi}
[\widehat{\varphi}_x, \widehat{\pi}_{x'}] := 
\widehat{\varphi}_x\widehat{\pi}_{x'}
-
\widehat{\pi}_{x'}\widehat{\varphi}_x
=
i \delta_{x,x'}\,.    
\end{align}
We will often need to compute analytic functions of operators $O$, e.g.~$f(O) = \e^{O}$. These expressions are defined by the Taylor expansion of $f$.

A quantum state is a normalized superposition: $\braket{\Psi}{\Psi}=1$.
We can define expectation values of an operator $O$ by its matrix elements in a state $\bra{\Psi} O \ket{\Psi}$. This reduces to classical expectation values when $O$ is diagonal:  $\bra{\Psi} O \ket{\Psi} =\int D(\bm{\varphi})D(\bm{\varphi'})\braket{\Psi}{\bm{\varphi}}\bra{\bm{\varphi}} O \ket{\bm{\varphi'}}\braket{\bm{\varphi'}}{\Psi} = \mathbb{E}_{\bm{\varphi}\sim |\psi(\bm{\varphi})|^2}[O(\bm{\varphi})]$.

Quantum dynamics needs to preserve the norm of quantum states and acts by unitary operators $\widehat{U}^t = \e^{-i t \widehat{H}}$, where $\widehat{H}$ is a self-adjoint Hamiltonian.
Instead of evolving states we can equivalently evolve the observables $\widehat{A}$ we are going to measure:  $\widehat{A}(t) = (\widehat{U}^t)^\dagger \widehat{A}\,\widehat{U}^t$, which can be computed using the Baker-Campbell-Hausdorff formula:
\begin{align}
\label{eq:bch}
\e^{+i t \widehat{H}} \widehat{A} \,
\e^{-i t \widehat{H}}
=
\widehat{A}
+it [\widehat{H}, \widehat{A} ]
- \frac{t^2}{2}
[\widehat{H}, [\widehat{H}, \widehat{A}]]
+\dots
\end{align}
$\widehat{A}(t)$ satisfies the Heisenberg equation of motion:
\begin{align}
\label{eq:eom}
    \frac{\dd \widehat{A}(t)}{\dd t} = 
    i[\widehat{H}, \widehat{A}(t)]
    \,.
\end{align}

Measurements reduce a quantum superposition to a classical configuration \footnote{One can consider measurements of non-diagonal self-adjoint fields $O$, which project onto an eigenstate of $O$, but in this work we shall restrict to measurements of diagonal operators.}. 
This projection $\ket{\Psi}\mapsto \ket{\bar{\bm{\varphi}}}$ occurs with probability $|\psi(\bar{\bm{\varphi}})|^2$.
We shall make use below of the following more general fact about partial measurements (see e.g.~\cite{watrous2018theory}).
\begin{prop}
\label{prop:partial_measurement}
Let $\ket{\Psi}$ be a prior state in ${\cal H}={\cal H}_1 \otimes {\cal H}_2$.
If a measurement on ${\cal H}_1$ gives outcome $y_1$, the posterior state is $\ket{\zeta_{y_1}}=
(P_{y_1}\otimes \id_2) \ket{\Psi}/
\sqrt{\bra{\Psi} P_{y_1}\otimes \id_2 \ket{\Psi}}$, with $P_y = \ket{y}\bra{y}$ the projector on state $\ket{y}$.
\end{prop}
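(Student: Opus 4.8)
The plan is to reduce the statement to the single-system measurement postulate already recalled above (projection onto $\ket{\bar{\bm{\varphi}}}$ with probability $|\psi(\bar{\bm{\varphi}})|^2$) by extending the measurement on ${\cal H}_1$ to all of ${\cal H}={\cal H}_1\otimes{\cal H}_2$ in the canonical way, i.e.\ by tensoring the projectors $P_{y_1}=\ket{y_1}\bra{y_1}$ with $\id_2$. First I would fix the (generalized) eigenbasis $\{\ket{y}\}$ of the quantity being measured on ${\cal H}_1$, with $\braket{y}{y'}=\delta(y-y')$, and expand the prior state as
\begin{equation}
\ket{\Psi}=\int D(y)\,\ket{y}\otimes\ket{\chi_y}\,,\qquad \ket{\chi_y}=(\bra{y}\otimes\id_2)\ket{\Psi}\in{\cal H}_2\,,
\end{equation}
so that normalization reads $\int D(y)\,\braket{\chi_y}{\chi_y}=1$ (in the discrete case $\int D(y)$ is replaced by a sum). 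A one-line computation using $\braket{y_1}{y}=\delta(y_1-y)$ then gives $(P_{y_1}\otimes\id_2)\ket{\Psi}=\ket{y_1}\otimes\ket{\chi_{y_1}}$ and, since $P_{y_1}$ is an orthogonal projector, $\bra{\Psi}(P_{y_1}\otimes\id_2)\ket{\Psi}=\|(P_{y_1}\otimes\id_2)\ket{\Psi}\|^2=\braket{\chi_{y_1}}{\chi_{y_1}}$.

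Next I would apply the Born rule and the projection postulate to this extended measurement: outcome $y_1$ occurs with probability (density) $\bra{\Psi}(P_{y_1}\otimes\id_2)\ket{\Psi}=\braket{\chi_{y_1}}{\chi_{y_1}}$, and conditioned on it the state collapses to the normalized image of $\ket{\Psi}$ under the corresponding projector,
\begin{equation}
\ket{\zeta_{y_1}}=\frac{(P_{y_1}\otimes\id_2)\ket{\Psi}}{\sqrt{\bra{\Psi}(P_{y_1}\otimes\id_2)\ket{\Psi}}}=\ket{y_1}\otimes\frac{\ket{\chi_{y_1}}}{\sqrt{\braket{\chi_{y_1}}{\chi_{y_1}}}}\,,
\end{equation}
which is exactly the claimed formula — and, reassuringly, it has the expected physical form: the ${\cal H}_1$ register is left in the definite state $\ket{y_1}$ while ${\cal H}_2$ inherits the normalized conditional vector $\ket{\chi_{y_1}}$.

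The main obstacle is not computational but the justification of the very first step: that ``measuring only ${\cal H}_1$'' is correctly modeled by the coarse-grained projective measurement $\{P_{y_1}\otimes\id_2\}_{y_1}$ on the joint space. This is the marginal form of the Born rule, and the cleanest way I would argue it is to note that not resolving ${\cal H}_2$ is the same as jointly measuring the ${\cal H}_1$-observable together with the trivial observable $\id_2$, whose only eigenprojector is $\id_2$ itself; the outcome probabilities and post-measurement states of that joint measurement are then given directly by the postulates quoted in the text. One should also be mildly careful that, for a continuous ${\cal H}_1$ spectrum, $P_{y_1}$ and the associated probabilities are densities with respect to $D(y)$ rather than genuine projectors/probabilities, but this is the same bookkeeping already in use for measurements of diagonal fields, so no new issue arises; in the discrete case it is literally the textbook projective measurement.
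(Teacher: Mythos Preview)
Your argument is correct and is the standard textbook derivation of the partial-measurement rule. Note, however, that the paper does not actually prove this proposition: it is stated as a known fact with a reference to \cite{watrous2018theory}, so there is no ``paper's proof'' to compare against. Your write-up therefore goes beyond what the paper provides, and the justification you give for modeling a subsystem measurement by $\{P_{y_1}\otimes\id_2\}_{y_1}$ (tensoring with the trivial measurement on ${\cal H}_2$) is exactly the point one would extract from a reference such as Watrous.
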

We note that Bayes rule follows from proposition \ref{prop:partial_measurement}: a subsequent measurement of an observable on ${\cal H}_2$ with outcome $y_2$ on the state $\ket{\zeta_{y_1}}$ will give outcome probability:
\begin{align}
   p(y_2 | y_1)
   =
   \bra{\zeta_{y_1}}\id_1 \otimes P_{y_2} \ket{\zeta_{y_1}}
   =
   \frac{\bra{\Psi} P_{y_1}\otimes P_{y_2}\ket{\Psi}}
   {\bra{\Psi} P_{y_1}\otimes \id_2\ket{\Psi}}
   \,,
\end{align}
which coincides with 
$p(y_1, y_2) / p(y_1)$.

\subsection{Gaussian States}
\label{sec:gaussian_states}

Define the $2|{\cal X}|$ dimensional vector of operators:
\begin{align}
\widehat{\bm{R}}
=
(\widehat{\varphi}_1,\dots,\widehat{\varphi}_{|{\cal X}|},
\widehat{\pi}_1,\dots,\widehat{\pi}_{|{\cal X}|})
\,.
\end{align}
After introducing the symplectic form $J$, \eqref{eq:ccr_phi_phi} reads as:
\begin{align}
\label{eq:ccr}
[\widehat{R}_i, \widehat{R}_j]
=
i J_{ij}
\,,
\quad
J = 
\begin{pmatrix}
0 & \id_{|\cal X|}\\
-\id_{|\cal X|} & 0
\end{pmatrix}
\,.
\end{align}
Gaussian states are specified uniquely by their mean and covariance defined as:
\begin{align}
    \label{eq:mean}
    \bm{m} &= \bra{\Psi} \widehat{\bm{R}} \ket{\Psi}\\
    \label{eq:cov}
    \tfrac{1}{2}C_{ij} &= 
    \bra{\Psi} \tfrac{1}{2}
    (\widehat{R}_i\widehat{R}_j+\widehat{R}_j\widehat{R}_i) \ket{\Psi}
    -
    m_im_j\,.
\end{align}
Here and below we shall denote by $1,2$ the first and second $|{\cal X}|$ components related to $\widehat{\varphi}$, $\widehat{\pi}$ sectors:
\begin{align}
    \bm{m} = (\bm{m}^1, \bm{m}^2)
    \,,
    \quad
    C = \begin{pmatrix}
    C^{11} & C^{12}\\
    C^{21} & C^{22}
    \end{pmatrix}
    \,.
\end{align}

\begin{prop}
\label{prop:C}
The covariance matrix $C$ satisfies:
\begin{align}
    C=C^T\,,\quad
    C>0\,,\quad
    C + i J \ge 0\,.    
\end{align}
\end{prop}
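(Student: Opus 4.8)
The plan is to reduce all three claims to the elementary fact that $\|\widehat{O}\ket{\Psi}\|^2\ge 0$ for any operator $\widehat{O}$ built linearly from the centered fields. First I would pass to the centered operators $\widehat{\Delta}_i := \widehat{R}_i - m_i\id$; since the $m_i$ are real $c$-numbers these are still self-adjoint and satisfy the same commutation relations $[\widehat{\Delta}_i,\widehat{\Delta}_j]=iJ_{ij}$ as in \eqref{eq:ccr}. Symmetry $C=C^\top$ is then immediate from \eqref{eq:cov}, whose right-hand side involves only the symmetrized product and the symmetric term $m_im_j$.

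Next I would record the key identity, obtained by writing $\widehat{\Delta}_i\widehat{\Delta}_j$ as the sum of its symmetric and antisymmetric parts and inserting \eqref{eq:cov} together with the commutation relations:
\begin{align}
\bra{\Psi}\widehat{\Delta}_i\widehat{\Delta}_j\ket{\Psi}
= \tfrac12\big(C_{ij}+iJ_{ij}\big)\,.
\end{align}
Given any $\bm{z}\in\mathbb{C}^{2|{\cal X}|}$, set $\widehat{Z}=\sum_i z_i\widehat{\Delta}_i$, so $\widehat{Z}^\dagger=\sum_i\bar z_i\widehat{\Delta}_i$; then
\begin{align}
0 \le \| \widehat{Z}\ket{\Psi} \|^2
= \bra{\Psi}\widehat{Z}^\dagger\widehat{Z}\ket{\Psi}
= \sum_{i,j}\bar z_i z_j\,\bra{\Psi}\widehat{\Delta}_i\widehat{\Delta}_j\ket{\Psi}
= \tfrac12\,\bm{z}^\dagger(C+iJ)\bm{z}\,,
\end{align}
which is exactly $C+iJ\ge 0$ (this matrix is Hermitian because $C$ is real symmetric and $iJ$ is Hermitian). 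Specializing to real $\bm{z}$, where $\bm{z}^\top J\bm{z}=0$ by antisymmetry of $J$, yields $\bm{z}^\top C\bm{z}\ge 0$, i.e.\ $C\ge 0$.

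For the strict inequality $C>0$ I would argue by contradiction: if $\bm{v}^\top C\bm{v}=0$ for some real $\bm{v}$, then $\bm{v}^\dagger(C+iJ)\bm{v}=0$, so $\bm{v}$ lies in the kernel of the positive semidefinite Hermitian matrix $C+iJ$; separating real and imaginary parts of $(C+iJ)\bm{v}=0$ gives $C\bm{v}=0$ and $J\bm{v}=0$, and since $J$ is invertible, $\bm{v}=0$. The only genuinely delicate point is the standard one that the $\widehat{R}_i$ are unbounded, so these manipulations should be justified on a dense domain of well-behaved (e.g.\ Schwartz) wavefunctions, to which Gaussian states belong — equivalently one can run the same argument at the level of the smooth Gaussian characteristic/Wigner function. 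I expect this functional-analytic bookkeeping, rather than the algebra, to be the main thing to handle carefully.
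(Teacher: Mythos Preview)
Your argument is correct and in fact more complete than what the paper does. The paper's proof only treats the single-mode case $|{\cal X}|=1$ explicitly: it writes out the $2\times2$ covariance, observes that $C+iJ\ge0$ is equivalent there to $\Det(C+iJ)\ge0$, and identifies this with the Robertson--Schr\"odinger uncertainty relation; for general $|{\cal X}|$ it simply cites the literature. You instead give the direct general-dimension argument: the identity $\bra{\Psi}\widehat{\Delta}_i\widehat{\Delta}_j\ket{\Psi}=\tfrac12(C_{ij}+iJ_{ij})$ together with $\|\widehat{Z}\ket{\Psi}\|^2\ge0$ yields $C+iJ\ge0$ and $C\ge0$ in one stroke, and your kernel argument using the invertibility of $J$ upgrades $C\ge0$ to $C>0$ --- a point the paper's sketch does not address at all. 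Your route is self-contained and is essentially the standard proof one would find in the reference the paper cites; the paper's version trades generality for the pedagogical link to the familiar scalar uncertainty inequality.
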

See Appendix \ref{sec:proofs_sec_background} for a proof.
The condition $C + i J \ge 0$ encodes the uncertainty principle and distinguishes quantum Gaussian states from classical Gaussian distributions on phase space \cite{Bartlett_2012}. Figure \ref{fig:qblob} gives a visualization of this fact.

\begin{figure}
\floatbox[{\capbeside\thisfloatsetup{capbesideposition={right,top},capbesidewidth=4cm}}]{figure}[\FBwidth]
{\caption{The covariance ellipse $\frac{1}{2}
(\bm{z}-\bm{m})^T C (\bm{z}-\bm{m}) \le 1$, also known as ``quantum blob'' \cite{de2006symplectic}, in a 2d phase space  $\bm{z}=(\varphi,\pi)$. The area is proportional to $\Det(C)>1$ (uncertainty principle) and is preserved by \eqref{eq:D}, \eqref{eq:omega}.}\label{fig:qblob}}
{\begin{tikzpicture}[scale=0.6]
\draw[->] (-1,0) -- (3,0)
node[below right] {$\varphi$};
\draw[->] (0,-1) -- (0,3)
node[left] {$\pi$};
\begin{scope}[rotate=30]
\shadedraw[left color=gray,right color=green, draw=green!50!black]
(2.65,.7) ellipse (1.8cm and 1cm);
\end{scope}
\draw[->] (0,0) -- (2,2)
node[left] {$\bm{m}$};
\end{tikzpicture}}
\vskip -0.2in
\end{figure}
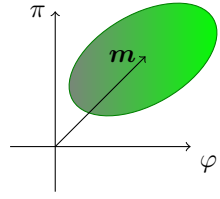

We denote a Gaussian state by $\ket{\bm{m}, C}$ \footnote{More general mixed Gaussian states can be defined, see e.g.~\cite{adesso2014continuous,Weedbrook_2012}, but we do not need such generality here.}. 
The wave-function $\braket{\bm{\varphi}}{\bm{m}, C}$ of a Gaussian state is a Gaussian function of $\bm{\varphi}$, albeit with complex quadratic form \cite{HUDSON1974249}.

We now show that the following unitary transformations, whose Hamiltonians are at most quadratic in $\widehat{R}_i$, implement the most general transformations among Gaussian states:
\begin{align}
    \label{eq:D}
    \widehat{D}(\bm{\xi})
    &=
    \e^{i \widehat{\bm{R}}^T J\bm{\xi}}
    \,,\quad \bm{\xi}\in \mathbb{R}^{2|{\cal X}|}
    \\
    \label{eq:omega}
    \widehat{\omega}(S)
    &=
    \e^{\frac{i}{2} \widehat{\bm{R}}^T JX 
    \widehat{\bm{R}}}
    \,,
    \quad 
    S=\e^X \in
    \text{Sp}_{2|{\cal X}|}(\mathbb{R})
    \,.
\end{align}
The Lie group $\text{Sp}_{2|{\cal X}|}(\mathbb{R})$ is the symplectic group of matrices satisfying $S J S^T = J$, which include rotations and scaling.
An element $X$ in its Lie algebra, $S=\e^X$, satisfies
$JX + X^TJ = JX - (JX)^T = 0$, 
so
\begin{align}
\left(
\widehat{\bm{R}}^T JX 
    \widehat{\bm{R}}\right)^\dagger
=
\widehat{\bm{R}}^T JX 
    \widehat{\bm{R}}
\,,
\end{align}
ensuring unitarity of $\widehat{\omega}(S)$.
$\widehat{D}$ and $\widehat{\omega}$ implement symmetry transformations (resp.~translations and linear symplectic transforms \footnote{More precisely, $\widehat{\omega}$ is a representation of the metaplectic group, a double cover of the symplectic group \cite{de2006symplectic}.}) on ${\cal H}$. The following proposition shows that $\widehat{\bm{R}}$ intertwines their action on ${\cal H}$ and the fundamental action on $\mathbb{R}^{2|{\cal X}|}$:
\begin{prop}
\label{prop:d_omega_trafo}
The unitaries of \eqref{eq:D} and \eqref{eq:omega} represent symplectic affine transformations of the canonical operators
\begin{align}
\widehat{D}(\bm{\xi})^\dagger
\widehat{\bm{R}}
\widehat{D}(\bm{\xi})
&=
\widehat{\bm{R}}
+
\bm{\xi}
\\
\widehat{\omega}(S)^\dagger
\widehat{\bm{R}}\,
\widehat{\omega}(S)
&=
S \widehat{\bm{R}}
\,.
\end{align}
\end{prop}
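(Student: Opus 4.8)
The plan is to derive both identities from the Hadamard lemma --- the operator expansion behind \eqref{eq:bch}, $\e^{A}\widehat{B}\,\e^{-A}=\widehat{B}+[A,\widehat{B}]+\tfrac12[A,[A,\widehat{B}]]+\cdots$ --- together with the canonical commutation relations \eqref{eq:ccr}. First I would note that both generators are self-adjoint: $\widehat{\bm{R}}^\top J\bm{\xi}=\sum_i\widehat{R}_i(J\bm{\xi})_i$ is manifestly so since $\bm{\xi}$ is real, and $\widehat{\bm{R}}^\top JX\widehat{\bm{R}}$ was already shown self-adjoint in the text. Hence $\widehat{D}(\bm{\xi})^\dagger=\e^{-i\widehat{\bm{R}}^\top J\bm{\xi}}$ and $\widehat{\omega}(S)^\dagger=\e^{-\frac{i}{2}\widehat{\bm{R}}^\top JX\widehat{\bm{R}}}$, so both adjoint actions are exactly of the form $\e^{A}\widehat{R}_k\,\e^{-A}$. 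The crucial structural fact is that, since the generators are at most quadratic in $\widehat{\bm{R}}$, a single commutator with a component $\widehat{R}_k$ lowers the polynomial degree: for $\widehat{D}$ it produces a $c$-number and the series truncates; for $\widehat{\omega}$ it reproduces something linear in $\widehat{\bm{R}}$ and the series resums to a matrix exponential.

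For $\widehat{D}(\bm{\xi})$, take $A=-iG$ with $G=\widehat{\bm{R}}^\top J\bm{\xi}=\sum_{i,j}\widehat{R}_iJ_{ij}\xi_j$. Using \eqref{eq:ccr} and the identity $J^\top J=\id$ (equivalently $J^2=-\id$), I would compute $[G,\widehat{R}_k]=\sum_{i,j}J_{ij}\xi_j\,[\widehat{R}_i,\widehat{R}_k]=i\sum_{i,j}J_{ij}J_{ik}\xi_j=i\,(J^\top J\bm{\xi})_k=i\xi_k$, which is a $c$-number. Therefore $[A,\widehat{R}_k]=\xi_k\,\id$ and all higher iterated commutators vanish, so $\widehat{D}(\bm{\xi})^\dagger\widehat{R}_k\widehat{D}(\bm{\xi})=\widehat{R}_k+\xi_k$, i.e.\ the first identity.

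For $\widehat{\omega}(S)$ with $S=\e^X$, take $A=-iK$ with $K=\tfrac12\widehat{\bm{R}}^\top JX\widehat{\bm{R}}=\tfrac12\sum_{i,j}(JX)_{ij}\widehat{R}_i\widehat{R}_j$. I would expand $[K,\widehat{R}_k]$ using the Leibniz rule $[\widehat{R}_i\widehat{R}_j,\widehat{R}_k]=\widehat{R}_i[\widehat{R}_j,\widehat{R}_k]+[\widehat{R}_i,\widehat{R}_k]\widehat{R}_j$ and \eqref{eq:ccr}, obtaining two terms. One of them is already $\tfrac{i}{2}\sum_l X_{kl}\widehat{R}_l$ by $J^\top J=\id$; the other equals $\tfrac{i}{2}\sum_l(JXJ)_{lk}\widehat{R}_l$, and here the symplectic Lie-algebra condition $JX+X^\top J=0$ (already invoked for unitarity) is exactly what is needed: it gives $X^\top=JXJ$, whence $(JXJ)^\top=JX^\top J=J^2XJ^2=X$ using $J^2=-\id$, so this term is also $\tfrac{i}{2}\sum_l X_{kl}\widehat{R}_l$. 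Adding them, $[-iK,\widehat{R}_k]=\sum_l X_{kl}\widehat{R}_l$, i.e.\ bracketing with $-iK$ acts on the vector $\widehat{\bm{R}}$ as left multiplication by $X$. Iterating, the $n$-fold commutator acts as $X^n$, and the Hadamard series resums: $\widehat{\omega}(S)^\dagger\widehat{\bm{R}}\,\widehat{\omega}(S)=\sum_{n\ge0}\tfrac{1}{n!}X^n\widehat{\bm{R}}=\e^X\widehat{\bm{R}}=S\widehat{\bm{R}}$.

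The only genuinely delicate step is the index- and sign-bookkeeping in the $\widehat{\omega}$ computation --- turning $(JXJ)^\top$ and $X^\top$ into the clean matrix $X$ --- which is where the symplectic condition $JX+X^\top J=0$ is consumed; everything else (self-adjointness of the generators, termination of the $\widehat{D}$ series, and the geometric resummation for $\widehat{\omega}$) is routine once those single commutators are in hand. As is customary for such unbounded-operator manipulations, all identities are understood on a suitable dense invariant domain (e.g.\ finite particle number), and can alternatively be checked directly on Gaussian wave-functions.
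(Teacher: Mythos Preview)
Your proposal is correct and follows exactly the paper's approach: apply the Hadamard expansion \eqref{eq:bch} and compute the single commutators $[i\widehat{\bm{R}}^T J\bm{\xi},\widehat{R}_k]=-\xi_k$ and $[-\tfrac{i}{2}\widehat{\bm{R}}^T JX\widehat{\bm{R}},\widehat{R}_k]=(X\widehat{\bm{R}})_k$, from which the two identities follow by truncation and resummation respectively. Your write-up is in fact more explicit than the paper's, spelling out the index manipulation and pinpointing precisely where the symplectic condition $JX+X^TJ=0$ is consumed.
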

This result follows from formula \eqref{eq:bch}
and the commutation relations \eqref{eq:ccr}.
See Appendix \ref{sec:proofs_sec_background}
for details. This related proposition gives
the effect on the mean and covariance of Gaussian states and is
also proved in the Appendix \ref{sec:proofs_sec_background}.
\begin{prop}
\label{prop:gaussian_trafo}
Under the unitaries of \eqref{eq:D} and \eqref{eq:omega}, the Gaussian states transform as
\begin{align}
\widehat{D}(\bm{\xi})
\ket{\bm{m}, C}
&=
\ket{\bm{m}+\bm{\xi}, C}
\\
\widehat{\omega}(S)
\ket{\bm{m}, C}
&=
\ket{S\bm{m}, SCS^T}
\,.
\end{align}
\end{prop}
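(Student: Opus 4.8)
The plan is to lean on the fact recalled in Section~\ref{sec:gaussian_states} that a Gaussian state is determined uniquely by its mean and covariance \eqref{eq:mean}--\eqref{eq:cov}. So I would split the argument into two parts: (i) show that $\widehat{D}(\bm{\xi})\ket{\bm{m},C}$ and $\widehat{\omega}(S)\ket{\bm{m},C}$ are again Gaussian states; (ii) compute their mean and covariance and check that they equal $(\bm{m}+\bm{\xi},C)$ and $(S\bm{m},SCS^T)$. Part (ii) is routine once Proposition~\ref{prop:d_omega_trafo} is available, so I would dispatch it first.

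For the displacement, write $\ket{\Psi'}=\widehat{D}(\bm{\xi})\ket{\bm{m},C}$. Proposition~\ref{prop:d_omega_trafo} gives $\bm{m}'=\bra{\Psi'}\widehat{\bm{R}}\ket{\Psi'}=\bra{\bm{m},C}(\widehat{\bm{R}}+\bm{\xi})\ket{\bm{m},C}=\bm{m}+\bm{\xi}$, and, inserting $\widehat{D}\widehat{D}^\dagger=\id$ between the two canonical operators,
\begin{align}
\widehat{D}(\bm{\xi})^\dagger\widehat{R}_i\widehat{R}_j\,\widehat{D}(\bm{\xi})
&=
(\widehat{R}_i+\xi_i)(\widehat{R}_j+\xi_j)\,;
\end{align}
taking the expectation in $\ket{\bm{m},C}$, symmetrizing in $i,j$, and subtracting $m_i'm_j'=(m_i+\xi_i)(m_j+\xi_j)$, all $\bm{\xi}$-dependent contributions cancel and $C'=C$. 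For the symplectic case, $\widehat{\omega}(S)^\dagger\widehat{\bm{R}}\,\widehat{\omega}(S)=S\widehat{\bm{R}}$ gives $\bm{m}'=S\bm{m}$ and
\begin{align}
\widehat{\omega}(S)^\dagger\widehat{R}_i\widehat{R}_j\,\widehat{\omega}(S)
&=
\sum_{k,l}S_{ik}S_{jl}\,\widehat{R}_k\widehat{R}_l\,,
\end{align}
so the symmetrized expectation in $\ket{\bm{m},C}$ equals $\sum_{k,l}S_{ik}S_{jl}\big(\tfrac12 C_{kl}+m_km_l\big)$ — here one uses the symmetry of the anticommutator to relabel $k\leftrightarrow l$ in the cross term — and subtracting $m_i'm_j'=(S\bm{m})_i(S\bm{m})_j$ yields $\tfrac12 C'_{ij}=\tfrac12(SCS^T)_{ij}$. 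As a sanity check, $SJS^T=J$ ensures $SCS^T$ still satisfies Proposition~\ref{prop:C}.

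The genuinely nontrivial step, and the one I expect to be the main obstacle, is part (i): that applying these unitaries keeps us inside the Gaussian family at all. I would establish it at the level of the wave-function $\braket{\bm{\varphi}}{\bm{m},C}$, which is a complex-quadratic Gaussian in $\bm{\varphi}$. Then $\widehat{D}(\bm{\xi})$ acts as a real translation of $\bm{\varphi}$ composed with multiplication by a plane-wave phase — a Gaussian is sent to a Gaussian — while $\widehat{\omega}(S)$ is the metaplectic representative of $S$, which in the $\bm{\varphi}$-representation is an integral operator with Gaussian kernel (a linear change of variables plus quadratic phase when $S$ does not mix $\widehat{\varphi}$ and $\widehat{\pi}$, and a Fourier-type Gaussian integral in general); again a Gaussian wave-function maps to a Gaussian wave-function. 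A cleaner alternative avoiding wave-functions is to track the characteristic function $\bra{\bm{m},C}\widehat{D}(\bm{\eta})\ket{\bm{m},C}$: by \eqref{eq:bch} and the canonical relations \eqref{eq:ccr}, conjugating $\widehat{D}(\bm{\eta})$ by $\widehat{D}(\bm{\xi})$ or $\widehat{\omega}(S)$ returns $\widehat{D}$ of a linearly transformed argument times a c-number phase, so the Gaussian form $\exp\!\big(i\,(\text{linear in }\bm{\eta})-\tfrac14\,(\text{quadratic in }\bm{\eta})\big)$ is manifestly preserved, with parameters transforming exactly as in (ii). Either route uses only standard properties of the metaplectic representation; the single caveat worth flagging is that $\widehat{\omega}$ is a representation of the double cover of $\mathrm{Sp}_{2|{\cal X}|}(\mathbb{R})$, so $\widehat{\omega}(S)\ket{\bm{m},C}$ is defined only up to a sign, but this overall phase does not affect $(\bm{m},C)$ and hence not the claimed identity.
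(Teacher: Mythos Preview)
Your proposal is correct and its computational core---part~(ii), using Proposition~\ref{prop:d_omega_trafo} to read off the transformed mean and covariance---is exactly the paper's proof. The paper actually stops there and does not separately argue part~(i) (that the image state remains Gaussian); it tacitly relies on the standard fact that displacements and metaplectic unitaries preserve the Gaussian family, so your characteristic-function/wave-function discussion is more thorough than what the paper provides rather than different from it.
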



\section{Quantum Extensions of Probabilistic Numeric NNs}
\label{sec:qnn}

We introduce in this section a series of quantum operations that generalize the classical layers of a probabilistic numeric NN. 

\subsection{State Preparation}

We start by showing how to perform Bayesian inference with Gaussian states. From \ref{prop:partial_measurement} we have the following result.
\begin{lemma}[Quantum GP Inference]\label{lemma:qgp}
Let $\ket{0, C}$ be a Gaussian prior state such that
$C^{11}_{x,x'} = k_{x,x'}$.
Given data ${\cal D} = \{ (x_i,y_i) \}_{i=1}^N$ we consider the posterior:
\begin{align}
    \ket{\zeta_{\cal D}}
    =
    \frac{
    P_{\cal D}
    \ket{0, C}
    }
    {
    || P_{\cal D}
    \ket{0, C}||
    }
    \,,
    \quad
    P_{\cal D}
    = 
    \prod_{i=1}^N
    \ket{y_i}_{x_i}\prescript{}{x_i}{\bra{y_i}}\,,
\end{align}
where $\ket{y}_{x}$ is an eigenstate of $\widehat{\varphi}_x$.
We have:
\begin{align}
    \ket{\zeta_{\cal D}}
    =
    \ket{\bm{m}', C'}
    \,,
    \quad
    (m')^1_x = \mu_x'
    \,,\quad 
    (C')^{11}_{x,x'} = k'_{x,x'}
\end{align}
with $\mu',k'$ as in \eqref{eq:posterior}.
\end{lemma}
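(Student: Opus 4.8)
The plan is to reduce the claim to the textbook derivation of the Gaussian-process posterior (conditioning a joint Gaussian), with Proposition~\ref{prop:partial_measurement} supplying the quantum Bayesian update. Write $X=\{x_i\}_{i=1}^N$ for the observed sites and $\bar X={\cal X}\setminus X$ for the rest, and split $\bm\varphi=(\bm\varphi_X,\bm\varphi_{\bar X})$. The first ingredient is the fact recalled above (from \cite{HUDSON1974249}) that the position-representation wave function of a zero-mean Gaussian state is Gaussian, $\braket{\bm\varphi}{0,C}\propto \exp(-\tfrac12\bm\varphi^\top G\,\bm\varphi)$ for a complex symmetric $G$ with $\operatorname{Re}G>0$. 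Comparing the position marginal $|\braket{\bm\varphi}{0,C}|^2\propto \exp(-\bm\varphi^\top(\operatorname{Re}G)\bm\varphi)$ with definition~\eqref{eq:cov}—using that for the mutually commuting $\widehat\varphi_x$ the quantity $\tfrac12 C^{11}$ is the ordinary covariance—identifies $\operatorname{Re}G=(C^{11})^{-1}=k^{-1}$.

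Next I apply Proposition~\ref{prop:partial_measurement} with ${\cal H}_1$ the tensor factor of the sites in $X$ and $\ket{y_1}=\bigotimes_i\ket{y_i}_{x_i}$. This gives $\braket{\bm\varphi}{\zeta_{\cal D}}\propto\big(\prod_i\delta(\varphi_{x_i}-y_i)\big)\exp(-\tfrac12\bm\varphi_{\bar X}^\top G_{\bar X\bar X}\bm\varphi_{\bar X}-\bm\varphi_{\bar X}^\top G_{\bar X X}\bm y)$, i.e.\ the prior wave function with $\bm\varphi_X$ pinned to $\bm y$. The restriction of a Gaussian function to an affine subspace is again Gaussian, and $\operatorname{Re}G_{\bar X\bar X}>0$ since a principal submatrix of a positive-definite matrix is positive-definite; hence $\ket{\zeta_{\cal D}}$ is a Gaussian state—proper on the $\bar X$ modes, tensored with the pinned configuration $\ket{\bm y}_X$. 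Reading off the real Gaussian $|\braket{\bm\varphi_{\bar X}}{\zeta_{\cal D}}|^2$, the $\varphi$-sector of the posterior has, on $\bar X$, mean $-[(\operatorname{Re}G)_{\bar X\bar X}]^{-1}(\operatorname{Re}G)_{\bar X X}\bm y$ and $\varphi\varphi$-covariance $\tfrac12[(\operatorname{Re}G)_{\bar X\bar X}]^{-1}$ (so, in the normalization of \eqref{eq:cov}, $(C')^{11}_{\bar X\bar X}=[(\operatorname{Re}G)_{\bar X\bar X}]^{-1}$), while on $X$ the mean is $y_i$.

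It remains to recast these in the form \eqref{eq:posterior}. Substituting $\operatorname{Re}G=k^{-1}$ and invoking the block-inverse (Schur complement) identities $[(k^{-1})_{\bar X\bar X}]^{-1}=k_{\bar X\bar X}-k_{\bar X X}k_{XX}^{-1}k_{X\bar X}$ and $[(k^{-1})_{\bar X\bar X}]^{-1}(k^{-1})_{\bar X X}=-k_{\bar X X}k_{XX}^{-1}$ yields $(C')^{11}_{x,x'}=k_{x,x'}-\bm k_x^\top k_{XX}^{-1}\bm k_{x'}$ and $(m')^1_x=\bm k_x^\top k_{XX}^{-1}\bm y$, i.e.\ exactly \eqref{eq:posterior} with the observed block $A$ equal to $k_{XX}$; the measurement-noise term $\sigma_n^2$ in $A$ is recovered by replacing the sharp projective observation with a noisy one—e.g.\ adjoining one ancilla mode of variance $\sigma_n^2$ per data point and measuring the sum—which promotes $A$ to $k_{XX}+\sigma_n^2\id$. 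The only genuinely delicate points are (i) that $P_{\cal D}$ projects onto non-normalizable position eigenstates, so $\ket{\zeta_{\cal D}}$ is Gaussian only in the limiting sense that the observed quadratures are infinitely squeezed and their conjugate momenta acquire infinite variance—which is why the lemma records only the $\varphi$-sector—and (ii) the bookkeeping of the factor $2$ in \eqref{eq:cov} and of where $\sigma_n^2$ enters; both are routine but worth stating carefully, and I expect them to be the main places requiring care, the rest being the standard joint-Gaussian conditioning computation. (Alternatively, one could short-circuit the wave-function manipulation by quoting the Gaussian homodyne-conditioning formula from \cite{Weedbrook_2012,adesso2014continuous}, but the self-contained route above parallels the classical derivation of \eqref{eq:posterior}.)
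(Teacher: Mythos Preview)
Your argument is correct and follows the same route as the paper: apply Proposition~\ref{prop:partial_measurement} with ${\cal H}_1$ the factor on the observed sites and then invoke the standard Gaussian/GP conditioning formulas. The paper's proof is a one-liner that cites \cite{rasmussen2006gaussian} for the latter, whereas you have unpacked both steps (explicit Gaussian wave function, Schur-complement identities) and flagged the two genuine subtleties—the $\sigma_n^2$ term arising only from a noisy rather than sharp observation, and the improper/infinitely-squeezed nature of the observed modes after projection—that the paper glosses over.
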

\begin{proof}
It follows directly from Prop.~\ref{prop:partial_measurement} with $\ket{\Psi} = \ket{0,C}$, ${\cal H}_1$ the space corresponding to locations $x_i$ and the formulas for the Gaussian conditionals \cite{rasmussen2006gaussian}.
\end{proof}
The quantum GP inference step allows one to encode a classical signal in a quantum state in such a way that quantum entanglement represents an agent's uncertainty about discretization errors.

\subsection{Quantum Linear Layers}

Next, we show how to perform the quantum equivalent of a linear layer that acts on the quantum fields $\widehat{\bm{R}}$ in the same way as a classical linear layer acts on a classical field 
$\bm{R}$. From Prop.~\ref{prop:d_omega_trafo} we have the following natural definition.
\begin{definition}\label{def:u_linear}
A quantum linear layer is the unitary:
\begin{align}
\widehat{U}_{\text{lin}}
(\bm{\xi}, S)
=
\widehat{D}(\bm{\xi})
\widehat{\omega}(S)
\,,
\end{align}
where $\widehat{D}(\bm{\xi})$ and $\widehat{\omega}(S)$ generalize the bias and multiplication by the weight matrix respectively.
\end{definition}

\subsection{Quantum Non-Linearity}
\label{sec:quantum_nl}

The definition of a unitary operator whose time evolution correspond to the action of a non-linearity is more involved, and in fact has remained elusive in the previous quantum neural network literature.
Similarly to classical non-linearities, a quantum non-linearity acts pointwise on the quantum fields. This restricts the associated Hamiltonian to be $\sum_{x,a} \widehat{H}_{x,a}$, where $\widehat{H}_{x,a}$ acts non-trivially only on the quantum fields at $x,a$. As a design principle, we consider the following class of time evolutions which map $\widehat{\varphi}_{x,a}$ to a function $\sigma(\widehat{\varphi}_{x,a})$.
(Recall that a function of an operator is defined by its Taylor series.)
This will provide a simple way to embed classical neural networks in our framework. 
\begin{prop}\label{prop:u_sigma}
Under the time evolution generated by 
\begin{align}
\label{eq:hf}
    \widehat{U}_{\sigma} &= \exp\Big(-i\sum_{x,a} \widehat{H}_{x,a}\Big)
    \,,\\
    \widehat{H}_{x,a} &= 
    \tfrac{1}{2}(
    \widehat{\pi}_{x,a} f(\widehat{\varphi}_{x,a})
    +
    f(\widehat{\varphi}_{x,a}) \widehat{\pi}_{x,a} 
    )\,,
\end{align}
the fields evolve according to the equations of motion:
\begin{align}
    \label{eq:eom_nl_phi}
    \dot{\widehat{\varphi}}_{x,a}(t)
    &=
    f(\widehat{\varphi}_{x,a}(t))
    \,,\\
    \label{eq:eom_nl_pi}
    \dot{\widehat{\pi}}_{x,a}(t)
    &=
    -
    \tfrac{1}{2}(
    \widehat{\pi}_{x,a}(t) f'(\widehat{\varphi}_{x,a}(t))
    +
    \text{h.c.}
    )
    \,,
\end{align}
where h.c. means the Hermitian conjugate of the expression preceding it.
\end{prop}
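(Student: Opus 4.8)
The plan is to apply the Heisenberg equation of motion \eqref{eq:eom}, $\dd\widehat{A}(t)/\dd t = i[\widehat{H},\widehat{A}(t)]$ with $\widehat{H}=\sum_{x,a}\widehat{H}_{x,a}$, after reducing the problem to a single canonical pair. First I would observe that $\widehat{H}_{x,a}$ is built only from $\widehat{\varphi}_{x,a}$ and $\widehat{\pi}_{x,a}$, and that canonical operators at distinct sites commute by \eqref{eq:ccr_phi_phi}; hence the generators $\widehat{H}_{x,a}$ mutually commute, $\widehat{U}_\sigma=\prod_{x,a}\e^{-i\widehat{H}_{x,a}}$, and the Heisenberg evolution of $\widehat{\varphi}_{x,a},\widehat{\pi}_{x,a}$ is governed solely by $\widehat{H}_{x,a}$. (For real analytic $f$ each $\widehat{H}_{x,a}$ is self-adjoint, so $\widehat{U}_\sigma$ is unitary and the ``h.c.'' notation is meaningful.)

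The algebraic core is the identity $[\widehat{\pi}_{x,a},g(\widehat{\varphi}_{x,a})]=-i\,g'(\widehat{\varphi}_{x,a})$ for analytic $g$, which I would establish from \eqref{eq:ccr_phi_phi} by first checking it on monomials, $[\widehat{\pi}_{x,a},\widehat{\varphi}_{x,a}^n]=-i\,n\,\widehat{\varphi}_{x,a}^{n-1}$, and then extending linearly through the Taylor series defining $g$. Using this, together with the fact that $\widehat{\varphi}_{x,a}$ commutes with every $g(\widehat{\varphi}_{x,a})$, a short commutator computation gives
\begin{align}
i[\widehat{H}_{x,a},\widehat{\varphi}_{x,a}] &= f(\widehat{\varphi}_{x,a})\,, \\
i[\widehat{H}_{x,a},\widehat{\pi}_{x,a}] &= -\tfrac12\big(\widehat{\pi}_{x,a}f'(\widehat{\varphi}_{x,a})+f'(\widehat{\varphi}_{x,a})\widehat{\pi}_{x,a}\big)\,,
\end{align}
which are exactly the right-hand sides of \eqref{eq:eom_nl_phi}--\eqref{eq:eom_nl_pi} evaluated at $t=0$.

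To pass to general $t$, I would use that (i) $\widehat{H}$, and hence each $\widehat{H}_{x,a}$, is conserved, $(\widehat{U}_\sigma^t)^\dagger\widehat{H}_{x,a}\widehat{U}_\sigma^t=\widehat{H}_{x,a}$, since $\widehat{H}_{x,a}$ commutes with $\widehat{U}_\sigma$; and (ii) conjugation by a unitary is an algebra homomorphism and so commutes with analytic functions, $(\widehat{U}_\sigma^t)^\dagger g(\widehat{\varphi}_{x,a})\widehat{U}_\sigma^t=g(\widehat{\varphi}_{x,a}(t))$. Conjugating the two displayed identities by $\widehat{U}_\sigma^t$ then gives $\dd\widehat{\varphi}_{x,a}(t)/\dd t=i[\widehat{H},\widehat{\varphi}_{x,a}(t)]=f(\widehat{\varphi}_{x,a}(t))$ and, using that $\widehat{\pi}_{x,a}(t)$ and $f'(\widehat{\varphi}_{x,a}(t))$ stay self-adjoint so that the symmetrized product is the advertised ``$+$ h.c.'' combination, $\dd\widehat{\pi}_{x,a}(t)/\dd t=-\tfrac12(\widehat{\pi}_{x,a}(t)f'(\widehat{\varphi}_{x,a}(t))+\text{h.c.})$. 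The main point needing care — standard at the level of rigor used here — is the handling of unbounded operators and their analytic functions on a common dense domain, in particular checking that $[\widehat{\pi}_{x,a},g(\widehat{\varphi}_{x,a})]=-ig'(\widehat{\varphi}_{x,a})$ holds as an operator identity rather than merely a formal power series; the rest is elementary commutator algebra.
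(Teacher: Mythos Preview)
Your proposal is correct and follows the same approach as the paper: apply the Heisenberg equation of motion \eqref{eq:eom} together with the commutator identity $[\widehat{\pi}_{x,a},g(\widehat{\varphi}_{x,a})]=-i\,g'(\widehat{\varphi}_{x,a})$. The paper's proof is a one-line sketch of exactly this, while you have spelled out the reduction to a single site, the derivation of the commutator identity from monomials, and the passage from $t=0$ to general $t$ via the homomorphism property of unitary conjugation; this is more detailed but not a different route.
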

\begin{proof}
The equations of motion follow from \eqref{eq:eom} upon using the commutator $[\widehat{\pi}_{x,a}, f(\widehat{\varphi}_{x,a})] = -if(\widehat{\varphi}_{x,a})$.
\end{proof}

The following proposition relates $\sigma$ to $f$ (the proof checks the time derivatives and is in Appendix \ref{sec:proofs_qnn}).
\begin{prop}
\label{prop:ode_phi}
The ODEs \eqref{eq:eom_nl_phi}, \eqref{eq:eom_nl_pi} have solutions
\begin{align}
\widehat{\varphi}_{x,a}(t)
&=F^{-1}
\left(F(\widehat{\varphi}_{x,a}(0)) + t\right)
\,,\\
\widehat{\pi}_{x,a}(t)
&=
\frac{1}{2}
\left(
\widehat{\pi}_{x,a}(0)
\frac{f(\widehat{\varphi}_{x,a}(0))}
{f(\widehat{\varphi}_{x,a}(t))}
+\text{h.c.}\right)
\end{align}
where $F'(x) = 1/f(x)$.
\end{prop}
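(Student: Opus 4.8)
The plan is to verify the proposed solutions by checking that they satisfy the two ODEs \eqref{eq:eom_nl_phi} and \eqref{eq:eom_nl_pi} together with the correct initial conditions at $t=0$, invoking uniqueness of the operator-valued Heisenberg flow. Throughout I would suppress the indices $x,a$ and write $\widehat{\varphi}(t)$, $\widehat{\pi}(t)$, since the Hamiltonian is a sum of commuting single-mode pieces. A key point to keep in mind is that all operators appearing in these expressions are functions of the single operator $\widehat{\varphi}(0)$ together with $\widehat{\pi}(0)$, so commutators can be handled with the single basic relation $[\widehat{\pi}, g(\widehat{\varphi})] = -i\, g'(\widehat{\varphi})$ used in Prop.~\ref{prop:u_sigma}.

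First I would treat the $\widehat{\varphi}$ equation. Since $F' = 1/f$, the claimed solution $\widehat{\varphi}(t) = F^{-1}(F(\widehat{\varphi}(0)) + t)$ is, at the level of the single operator $\widehat{\varphi}(0)$, just the flow of the scalar ODE $\dot u = f(u)$; differentiating $F(\widehat{\varphi}(t)) = F(\widehat{\varphi}(0)) + t$ in $t$ gives $F'(\widehat{\varphi}(t))\,\dot{\widehat{\varphi}}(t) = 1$, i.e.\ $\dot{\widehat{\varphi}}(t) = f(\widehat{\varphi}(t))$, which is \eqref{eq:eom_nl_phi}; and at $t=0$ it reduces to $\widehat{\varphi}(0)$. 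This part is essentially the classical computation since only one operator is involved, so there is no ordering subtlety. I would also record here the identity $\partial_t \widehat{\varphi}(t) = f(\widehat{\varphi}(t))$ for use in the next step, and note $f(\widehat\varphi(t))$ commutes with $\widehat\varphi(0)$ (both being functions of $\widehat\varphi(0)$).

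Next I would check the $\widehat{\pi}$ equation, which is where the real work is. Write the claimed solution as $\widehat{\pi}(t) = \tfrac12(\widehat{\pi}(0) g(t) + g(t)^\dagger \widehat{\pi}(0))$ where $g(t) = f(\widehat{\varphi}(0))/f(\widehat{\varphi}(t))$ is a function of $\widehat{\varphi}(0)$ (hence $g(t)^\dagger = g(t)$ as an operator, though keeping the symmetrized form makes the h.c.\ structure transparent). I would differentiate in $t$: the derivative of $g(t)$ uses $\partial_t f(\widehat{\varphi}(t)) = f'(\widehat{\varphi}(t)) f(\widehat{\varphi}(t))$ from the chain rule and the previous step, giving $\partial_t g(t) = - f(\widehat{\varphi}(0)) f'(\widehat{\varphi}(t)) / f(\widehat{\varphi}(t)) = - g(t) f'(\widehat{\varphi}(t))$. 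Hence $\dot{\widehat{\pi}}(t) = -\tfrac12(\widehat{\pi}(0) g(t) f'(\widehat{\varphi}(t)) + \text{h.c.})$, and I then need to massage this into $-\tfrac12(\widehat{\pi}(t) f'(\widehat{\varphi}(t)) + \text{h.c.})$, i.e.\ I must show $\widehat{\pi}(0) g(t) = \widehat{\pi}(t)$ modulo terms that are self-adjoint (so that after symmetrizing with the h.c.\ the discrepancy cancels), using $\widehat{\pi}(t) = \tfrac12(\widehat{\pi}(0)g(t) + g(t)\widehat{\pi}(0))$. The difference $\widehat{\pi}(0)g(t) - \widehat{\pi}(t) = \tfrac12[\widehat{\pi}(0), g(t)] = -\tfrac{i}{2} g'(t)$ (derivative in the $\widehat\varphi(0)$-variable), which is anti-self-adjoint; I would check carefully that, once multiplied by $f'(\widehat\varphi(t))$ and added to its h.c., this extra piece indeed drops out — this bookkeeping of the symmetrization and the commutator is the main obstacle, since it is easy to drop a factor or a sign. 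Finally I would verify the initial condition: at $t=0$, $g(0) = 1$ so $\widehat{\pi}(0) = \tfrac12(\widehat{\pi}(0) + \widehat{\pi}(0)) = \widehat{\pi}(0)$, as required. With both ODEs and both initial conditions matched, uniqueness of solutions to the Heisenberg equations \eqref{eq:eom} finishes the proof.
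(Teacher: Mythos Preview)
Your proposal is correct and follows the same approach as the paper's proof: direct verification by differentiating the claimed formulas in $t$ and checking the equations of motion together with the initial conditions. In fact you are more careful than the paper about the operator-ordering issue in the $\widehat{\pi}$ equation --- the paper simply identifies $\widehat{\pi}(0)g(t)$ with $\widehat{\pi}(t)$ term-by-term, whereas you correctly track the commutator $[\widehat{\pi}(0),g(t)]$, which (being a function of $\widehat{\varphi}(0)$) commutes with $f'(\widehat{\varphi}(t))$, so the extra piece does indeed cancel after symmetrization.
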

Prop.~\ref{prop:ode_phi} gives a general, albeit implicit, solution to the problem of constructing a quantum non-linearity. Next, we give an explicit solution for the case of softplus, a popular smooth version of ReLU.
\begin{lemma}[Quantum Softplus]
\label{lemma:q_softplus}
The softplus non-linearity with temperature $\beta$,
\begin{align}
\label{eq:softplus}
\sigma_\beta(x) = \frac{1}{\beta}\log(1+\e^{\beta x})   \,,
\end{align}
corresponds to time evolution from time $0$ to time $1$ under
\begin{align}
    \label{eq:h_softplus}
    \widehat{H}_{x,a}
    &= 
    \tfrac{1}{2\beta}
    (\widehat{\pi}_{a,x} \e^{-\beta\widehat{\varphi}_{a,x}}
    +
    \e^{-\beta\widehat{\varphi}_{a,x}} 
    \widehat{\pi}_{a,x})
    \,.
\end{align}
\end{lemma}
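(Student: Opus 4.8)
The plan is to recognize the Hamiltonian \eqref{eq:h_softplus} as the special case of Prop.~\ref{prop:u_sigma} in which $f(\widehat{\varphi}_{x,a}) = \tfrac{1}{\beta}\e^{-\beta\widehat{\varphi}_{x,a}}$, and then simply to read off the resulting non-linearity $\sigma_\beta$ from the closed-form solution of Prop.~\ref{prop:ode_phi}. The first concrete step is to compute the antiderivative $F$ characterized by $F'(x) = 1/f(x) = \beta\,\e^{\beta x}$, which gives $F(x) = \e^{\beta x}$ up to an additive constant. Because the map $\widehat{\varphi}_{x,a}(t) = F^{-1}(F(\widehat{\varphi}_{x,a}(0)) + t)$ is invariant under $F \mapsto F + c$, that constant is irrelevant and we may take $F(x) = \e^{\beta x}$, whose inverse is $F^{-1}(y) = \tfrac{1}{\beta}\log y$ (well defined here since $\e^{\beta x} + t > 0$ for $t \in [0,1]$).

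Substituting into Prop.~\ref{prop:ode_phi} and evaluating at $t=1$ then gives
\begin{align}
\widehat{\varphi}_{x,a}(1)
= F^{-1}\big(F(\widehat{\varphi}_{x,a}(0)) + 1\big)
= \tfrac{1}{\beta}\log\big(\e^{\beta\widehat{\varphi}_{x,a}(0)} + 1\big)
= \sigma_\beta(\widehat{\varphi}_{x,a}(0))\,,
\end{align}
which is exactly the claimed identification. For completeness one would also record the momentum component from the second formula of Prop.~\ref{prop:ode_phi}: here $f(\widehat{\varphi}_{x,a}(0))/f(\widehat{\varphi}_{x,a}(1)) = 1 + \e^{-\beta\widehat{\varphi}_{x,a}(0)}$ is a function of the single field $\widehat{\varphi}_{x,a}(0)$ alone, so the symmetrization implied by ``h.c.'' is automatic and $\widehat{\pi}_{x,a}(1) = \tfrac{1}{2}(\widehat{\pi}_{x,a}(0)(1+\e^{-\beta\widehat{\varphi}_{x,a}(0)}) + \text{h.c.})$.

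As an alternative to invoking Prop.~\ref{prop:ode_phi}, I would verify the scalar ODE directly: setting $g(t) := \tfrac{1}{\beta}\log(\e^{\beta x} + t)$ one has $g(0) = x$ and $\dot g(t) = \tfrac{1}{\beta}(\e^{\beta x} + t)^{-1} = \tfrac{1}{\beta}\e^{-\beta g(t)} = f(g(t))$, matching \eqref{eq:eom_nl_phi}; promoting $x$ to $\widehat{\varphi}_{x,a}(0)$ is legitimate because the evolution of $\widehat{\varphi}_{x,a}$ stays inside the commutative algebra generated by $\widehat{\varphi}_{x,a}$ at the fixed site, so no operator-ordering ambiguity appears. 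I do not expect a genuine obstacle in this lemma — all of the work is already contained in Props.~\ref{prop:u_sigma} and \ref{prop:ode_phi}; the only points that deserve a sentence of care are the (harmless) integration constant in $F$, the positivity ensuring $F^{-1}$ is evaluated in its domain, and the commutativity remark that makes the operator statement follow from the c-number computation.
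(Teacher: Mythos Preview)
Your proposal is correct and follows essentially the same approach as the paper: identify $f(x)=\beta^{-1}\e^{-\beta x}$, compute $F(x)=\e^{\beta x}$ and $F^{-1}(y)=\beta^{-1}\log y$, and read off $\sigma_\beta$ from Prop.~\ref{prop:ode_phi} at $t=1$. The extra remarks you add (integration constant, domain of $F^{-1}$, the $\widehat{\pi}$ evolution, and the direct ODE check) are all sound but go beyond what the paper's two-line proof records.
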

\begin{proof}
This Hamiltonian is of the general form \eqref{eq:hf} with $f(x) = \beta^{-1}
\e^{-\beta x}$. The antiderivative of its inverse, 
$\beta \e^{+\beta x}$, is $F(x) = \e^{\beta x}$. So $F^{-1}(y) = \beta^{-1}\log(y)$ and
the result then follows immediately from Prop.~\ref{prop:ode_phi}.
\end{proof}
We now move on to discussing architectures and symmetries of quantum neural networks. We shall resume the study of non-linearities in section \ref{sec:classical_simulation} where we will look at a semi-classical limit of the quantum models.

\subsection{Architecture}
\label{sec:q_arch}

We define a quantum neural network by putting these pieces together:
\begin{align}
    \label{eq:uqnn}
    \widehat{U}_{\text{NN}}
    =
    \widehat{U}_{\text{lin}}
    (\bm{\xi}^{(L)}, S^{(L)})
    \widehat{U}_{{\cal P}}
    \prod_{\ell=0}^{L-1}
    \widehat{U}_{\sigma}
    \widehat{U}_{\text{lin}}
    (\bm{\xi}^{(\ell)}, S^{(\ell)}) 
    \,,
\end{align}
where $\widehat{U}_{\sigma}$ and $\widehat{U}_{\text{lin}}$ are as in 
Prop.~\ref{prop:u_sigma} and Def.~\ref{def:u_linear}, while
\begin{align}
\widehat{U}_{{\cal P}} = \widehat{\omega}(S = {\cal P}\oplus ({\cal P}^{-1})^T)\,,
\end{align}
is a global average pooling operator, where ${\cal P}$ is as in \eqref{eq:cl_pooling}.
To make a prediction we proceed similarly to \cite{finzi2020probabilistic}. We discard the spatial locations that have not been aggregated over by averaging, act with a final linear classifier and finally measure the means $\widehat{\varphi}_{c}$ for the $c=1,\dots,C$ channels, $C$ being the number of classes:
\begin{align}
\label{eq:q_logits}
l_c = 
\bra{\zeta_{\cal D}}
\widehat{U}_{\text{NN}}^\dagger \widehat{\varphi}_{c}  \widehat{U}_{\text{NN}} \ket{\zeta_{\cal D}}
\,.
\end{align}
These are interpreted as the logits for the classification task at hand.
For simplicity we introduced the state $\ket{\zeta_{\cal D}}$ in lemma \ref{lemma:qgp} without referring to the channel dimension. To make sense of the action of $\widehat{U}_{\text{NN}}$ on it, we add extra registers for the channel dimension which are initialized to the vacuum state, which is a Gaussian state with zero mean and unit covariance \cite{adesso2014continuous}.
While we considered here only a global average pooling at the end following \cite{finzi2020probabilistic}, it is possible to pool features in intermediate layers as well by simply discarding registers associated to the modes to be discarded.
Computing the logits of \eqref{eq:q_logits} is in general intractable classically and requires a quantum computer. 

Our definition of a quantum neural network is similar to that of \cite{killoran2019}. However, w.r.t.~that work we introduce the following two main novelties: 1) 
we use Gaussian states for data interpolation (lemma \ref{lemma:qgp}); 2) we use a unitary gate implementing the non-linearity (see Sec.~\ref{sec:quantum_nl}). \cite{killoran2019} discusses two sets of non-linearities: the first uses a quantum channel to implement a non-linearity of the type $\widehat{\varphi}_{x,a}\mapsto \sigma(\widehat{\varphi}_{x,a})$, but, w.r.t.~our setting, that implementation requires to double the number of quantum registers and then discard half of them.\footnote{In our setting $\sigma$ is constrained to be an ODE flow. However, this is not restrictive from the point of view of expressivity of the neural network due to the presence of linear layers.}
The second strategy is to employ Hamiltonians such as
$
    \widehat{H}_{\text{cubic}}
    =
    \widehat{\varphi}_{a,x}^3
    $
    or
    $
    \widehat{H}_{\text{Kerr}}
    =
    (\widehat{\varphi}_{a,x}^2 + \widehat{\pi}_{a,x}^2)^2
$
as non-linearities, since either of them, together with the unitaries of \eqref{eq:D} and \eqref{eq:omega}, form a simple set of universal gates for quantum computation with continuous variables \cite{Lloyd_1999}, i.~e.~any other quantum gate can be expressed in terms of those.\footnote{
The quantum linear layers together with $\widehat{U}_\sigma$
also provide a universal set if $f$ is at least a polynomial of degree $3$.
}
While requiring minimal resources, these choices are likely to perform worse than our choice. Indeed in the semi-classical limit discussed in section \ref{sec:classical_simulation}, these correspond to low degree polynomial non-linearities, which are not efficient for classical neural network approximation  \cite{pinkus_1999}.


\subsection{Symmetries}

Generally, symmetries in classical neural networks are realized as linear maps $g\in G$ that acts on the activations $\bm{\varphi}$ as $\rho(g) \bm{\varphi}$, where $\rho$ is a representation matrix.
On top of translations, 
prominent examples of $G$ in ML are rotations \cite{cohen2018spherical} and permutations \cite{maron2018invariant}.
Having replaced the linear action on activations with $\widehat{\omega}$, we define unitary representations of $G$ on quantum states by $\widehat{\omega}(S_g := \rho(g) \oplus \rho^*(g))$, where $\rho^*(g) = \rho(g^{-1})^T$ is the dual representation, ensuring symplecticity of $S_g$. We have
\begin{align}
\widehat{\omega}(S_g)^\dagger    
\begin{pmatrix}
\widehat{\bm{\varphi}}
\\
\widehat{\bm{\pi}}
\end{pmatrix}
\widehat{\omega}(S_g)
=
\begin{pmatrix}
\rho(g)\widehat{\bm{\varphi}}
\\
\rho^*(g)\widehat{\bm{\pi}}
\end{pmatrix}
\,.
\end{align}
For example, in case of translations along the $\mu\in\{1,\dots, d\}$ axis,
$(\tau_\mu\bm{\varphi})_{a,x} = \varphi_{a,x+e_\mu}$,
we have $S_{\tau_\mu} = \tau_\mu\oplus \tau_\mu$, which translates both 
$\widehat{\bm{\varphi}}, \widehat{\bm{\pi}}$ variables equally.

Equivariance of a quantum linear layer $\widehat{\omega}(S)$ now amounts to the commutation relations:
\begin{align}
\widehat{\omega}(S)
\widehat{\omega}(S_g)
=
\widehat{\omega}(S_g)
\widehat{\omega}(S)
\Rightarrow
S S_g = S_g S
\end{align}
where the second formula follows from the group homomorphism property: $\widehat{\omega}(S)\widehat{\omega}(S')=\widehat{\omega}(SS')$. 
The characterization of symmetries presented here completely solves the problem of designing equivariant quantum linear layers by reducing the problem to designing equivariant classical linear  layers with symplectic weight matrices $S$ which commute with $\rho(g)\oplus \rho^*(g)$.
As in the classical case, since the non-linearities act pointwise, they will be invariant under operations that permute the coordinates $(\widehat{\varphi}_{x,a}, \widehat{\pi}_{x,a}) \mapsto (\widehat{\varphi}_{x',a'}, \widehat{\pi}_{x',a'})$, such as spatial symmetries, ensuring equivariance of the whole architecture.

As an illustration, the condition $SS_{\tau_\mu}=S_{\tau_\mu}S$ restricts each $M\times M$ block of 
\begin{align}
    S
    =
    \begin{pmatrix}
        A & B \\
        C & D
    \end{pmatrix}
    \,,
\end{align}
to be a convolution. Similarly, for rotations $G= \text{SO}(3)$, irreps are self-dual, so $S_{g} = \rho(g) \oplus \rho(g)$ and each block of $S$ is a group convolution \cite{cohen2018spherical}.



\section{Hintons}
\label{sec:hintons}

We discuss now a particle interpretation of the formalism introduced for neural networks.
We introduce the operators:
\begin{align}
    \widehat{\bm{b}} 
    = 
    \frac{1}{\sqrt{2}}
    (\widehat{\bm{\varphi}} + i\widehat{\bm{\pi}})
    \,,\quad
    \widehat{\bm{b}}^\dagger
    = 
    \frac{1}{\sqrt{2}}
    (\widehat{\bm{\varphi}} - i\widehat{\bm{\pi}})
    \,.
\end{align}
We can then use a particle basis (a.k.a. Fock space) for ${\cal H}$, where we identify a zero mean and unit covariance Gaussian state with no particles (vacuum)  $\ket{\Omega} = \ket{0,\id}$ such that $\widehat{\bm{b}} \ket{\Omega} = 0$ and create an orthogonal basis by acting with different monomials $(\widehat{b}_{a_1,x_1}^\dagger)^{n_i}\cdots (\widehat{b}_{a_m,x_m}^\dagger)^{n_m}$ on $\ket{\Omega}$, $n_i$ being the number of particles at channel $a_i$ and location $x_i$.
In the quantum optical setting, the particles are called photons. In the neural network context we dub these fundamental excitations ``Hintons'' after G.~Hinton, a founding father of the field of deep learning.

In the next sections, we will analyze tractable limits of the quantum neural network defined so far to get more insights into the architectural design.


\section{The Case of Classical Probabilistic NNs}
\label{sec:qft_repr}

Here we show how a classical probabilistic NN can be embedded in the quantum model of Sec.~\ref{sec:q_arch}.
First we prove the following quantum representation of the push forward of a GP under a generic classical (invertible) map (proof in the Appendix \ref{sec:proofs_sec_qft_repr}). 
\begin{lemma}\label{lemma:push_forward_generic}
Let $\ket{\bm{m},C}$ be a Gaussian state and $\widehat{U}$ a unitary such that:
\begin{align}
    \label{eq:U_F}
    \widehat{U}^\dagger
    \widehat{\bm{\varphi}}
    \widehat{U}
    =
    F(\widehat{\bm{\varphi}})
    \,.
\end{align}
Then:
\begin{align}
    |\bra{\bm{\varphi}}
    \widehat{U}
    \ket{\bm{m}, C}
    |^2
    =
    \left(F \# {\cal GP}(\bm{m}^1, C^{11})\right)(\bm{\varphi})
\end{align}
where $f \# p$ denotes the push forward of $p$ under $f$.
\end{lemma}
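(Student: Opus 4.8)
The plan is to compute the squared modulus of the wavefunction $\bra{\bm{\varphi}}\widehat{U}\ket{\bm{m},C}$ directly, using the assumption \eqref{eq:U_F} to relate the action of $\widehat{U}$ on position eigenstates to the classical map $F$. The key observation is that if $\widehat{U}^\dagger\widehat{\bm{\varphi}}\widehat{U}=F(\widehat{\bm{\varphi}})$, then $\widehat{U}$ intertwines the spectral decompositions of $\widehat{\bm{\varphi}}$ and $F(\widehat{\bm{\varphi}})$: specifically, $\widehat{U}\ket{\bm{\varphi}}$ should be (up to a phase and a Jacobian factor) the eigenstate of $\widehat{\bm{\varphi}}$ with eigenvalue obtained by solving $F(\bm{\varphi}') = \bm{\varphi}$, i.e.\ $\widehat{U}\ket{\bm{\varphi}} \propto \ket{F^{-1}(\bm{\varphi})}$. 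Equivalently, working with $\widehat{U}^\dagger$, we have $\widehat{U}^\dagger \ket{\bm{\varphi}}$ is an eigenstate of $\widehat{\bm{\varphi}}$ with eigenvalue $F(\bm{\varphi})$, so $\widehat{U}^\dagger\ket{\bm{\varphi}} = J(\bm{\varphi})\ket{F(\bm{\varphi})}$ for some scalar $J(\bm{\varphi})$ whose modulus is fixed by normalization of the $\delta$-function scalar product.

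Concretely, I would first verify the intertwining relation: from $\widehat{U}^\dagger\widehat{\bm{\varphi}}\widehat{U} = F(\widehat{\bm{\varphi}})$ one gets $\widehat{\bm{\varphi}}\,\widehat{U} = \widehat{U}\,F(\widehat{\bm{\varphi}})$, hence applying both sides to $\ket{\bm{\varphi}_0}$ (where $F(\widehat{\bm{\varphi}})\ket{\bm{\varphi}_0} = F(\bm{\varphi}_0)\ket{\bm{\varphi}_0}$ by the functional calculus) shows $\widehat{U}\ket{\bm{\varphi}_0}$ is an eigenvector of $\widehat{\bm{\varphi}}$ with eigenvalue $F(\bm{\varphi}_0)$. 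Thus $\widehat{U}\ket{\bm{\varphi}_0} = c(\bm{\varphi}_0)\ket{F(\bm{\varphi}_0)}$. Then $\bra{\bm{\varphi}}\widehat{U}\ket{\bm{m},C}$, which I want as a function of $\bm{\varphi}$, is handled by writing $\widehat{U}^\dagger\ket{\bm{\varphi}} = \overline{c(F^{-1}(\bm{\varphi}))}\,|\det \partial F^{-1}(\bm{\varphi})|\,\ket{F^{-1}(\bm{\varphi})}$ — the Jacobian factor arising from $\delta(F(\bm{\varphi}_0) - \bm{\varphi}) = |\det \partial F(\bm{\varphi}_0)|^{-1}\delta(\bm{\varphi}_0 - F^{-1}(\bm{\varphi}))$ together with the normalization $\braket{\bm{\varphi}}{\bm{\varphi}'} = \delta(\bm{\varphi}-\bm{\varphi}')$, which forces $|c(\bm{\varphi}_0)|^2 |\det\partial F(\bm{\varphi}_0)| = 1$. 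Therefore
\begin{align}
|\bra{\bm{\varphi}}\widehat{U}\ket{\bm{m},C}|^2
= |\det \partial F^{-1}(\bm{\varphi})| \cdot |\braket{F^{-1}(\bm{\varphi})}{\bm{m},C}|^2.
\end{align}

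Finally I would invoke the fact (stated in Section~\ref{sec:gaussian_states}) that the marginal $|\braket{\bm{\varphi}'}{\bm{m},C}|^2$ of a Gaussian state over the $\widehat{\bm{\varphi}}$ sector is exactly the Gaussian density ${\cal GP}(\bm{m}^1, C^{11})(\bm{\varphi}')$ — the imaginary part of the quadratic form in the wavefunction drops out upon taking the modulus squared, leaving a real Gaussian with covariance $C^{11}$. Substituting $\bm{\varphi}' = F^{-1}(\bm{\varphi})$ and recognizing $|\det\partial F^{-1}(\bm{\varphi})|\,p(F^{-1}(\bm{\varphi}))$ as precisely the change-of-variables formula for the push-forward density $(F\#p)(\bm{\varphi})$ completes the proof. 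The main obstacle is making the distributional manipulations with the improper eigenstates $\ket{\bm{\varphi}}$ rigorous — in particular tracking the Jacobian factors correctly through the $\delta$-function normalization and justifying $\widehat{U}\ket{\bm{\varphi}} = c(\bm{\varphi})\ket{F(\bm{\varphi})}$; one could alternatively smear against test functions to avoid these manifestly formal steps, but for this paper's level of rigor the formal computation should suffice. A secondary point worth a sentence is that $F$ must be invertible (as hypothesized) for $F^{-1}$ and the push-forward change-of-variables to make sense.
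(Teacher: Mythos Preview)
Your approach is correct and coincides almost verbatim with the \emph{second} of the two proofs the paper supplies in Appendix~\ref{sec:proofs_sec_qft_repr}: deduce $\widehat{U}\ket{\bm{\varphi}_0}=c(\bm{\varphi}_0)\ket{F(\bm{\varphi}_0)}$ from the intertwining relation, fix $|c|$ via the $\delta$-normalization, and recognize the change-of-variables formula for the push-forward. One small slip to fix: the normalization actually gives $|c(\bm{\varphi}_0)|^2 = |\det\partial F(\bm{\varphi}_0)|$ (not its reciprocal), matching the paper's $\widehat{U}\ket{x} = |\det(\partial F/\partial x)|^{1/2}\ket{F(x)}$; your final displayed formula is nonetheless the correct one, so this is a bookkeeping typo rather than a conceptual error.

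The paper's \emph{primary} proof takes a different route: it represents the projector as a Fourier integral, $\ket{\bm{\varphi}}\bra{\bm{\varphi}} = \int D\bm{\lambda}\,(2\pi)^{-|\mathcal{X}|}\,\e^{i\bm{\lambda}^T(\widehat{\bm{\varphi}}-\bm{\varphi})}$, conjugates by $\widehat{U}$ to replace $\widehat{\bm{\varphi}}$ with $F(\widehat{\bm{\varphi}})$ inside the exponential, inserts a spectral resolution $f(\widehat{\bm{\varphi}})=\int D\bm{\varphi}'\,f(\bm{\varphi}')\ket{\bm{\varphi}'}\bra{\bm{\varphi}'}$, and performs the $\bm{\lambda}$-integral to obtain $\delta(F(\bm{\varphi}')-\bm{\varphi})$. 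That argument never computes $\widehat{U}\ket{\bm{\varphi}}$ explicitly and so avoids tracking Jacobian factors through the improper-state normalization; your route is more transparent about \emph{why} the push-forward appears (it is literally the change-of-variables density), while the Fourier route keeps the bookkeeping slightly cleaner.
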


We already know that $\widehat{U}_\sigma$ has this property, so we only need to constrain the quantum linear layers so that they do not mix $\widehat{\bm{\varphi}}$ with $\widehat{\bm{\pi}}$. 
\begin{theorem}\label{thm:qft_repr_pncnn}
Consider the quantum network \eqref{eq:uqnn} with:
\begin{align}
    S^{(\ell)}
    =
    \begin{pmatrix}
    {\cal A}^{(\ell)} & 0 \\
    0 & (({\cal A}^{(\ell)})^{-1})^T
    \end{pmatrix}
    \,,\quad
    \bm{\xi}^{(\ell)} = (\bm{b}^{(\ell)}, 0)
    \,.
\end{align}
We have the quantum--classical duality:
\begin{equation}
    |\bra{\bm{\varphi}} \widehat{U}_{\text{NN}} 
    \ket{\zeta_{\cal D}}  |^2
    =
    (\Phi \# {\cal GP}(\mu',k'))(\bm{\varphi})
    \,,
\end{equation}
where the rhs is the push forward of the GP posterior \eqref{eq:posterior} under the map $\Phi$ of \eqref{eq:pncnn_map}.
That is, the logits of Eq.~\eqref{eq:q_logits} computed by the quantum neural network coincide exactly with those computed by the probabilistic numeric NN of section \ref{sec:pncnn} with weights and biases ${\cal A}^{(\ell)}, \bm{\xi}^{(\ell)}$.
\end{theorem}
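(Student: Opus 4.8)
The plan is to reduce the whole statement to the single operator identity
\begin{align}
\widehat{U}_{\text{NN}}^\dagger\,\widehat{\bm\varphi}\,\widehat{U}_{\text{NN}} \;=\; \Phi(\widehat{\bm\varphi})\,,
\end{align}
where the right-hand side means: feed the operator-valued vector $\widehat{\bm\varphi}$ through the same chain of linear maps and pointwise maps that defines $\Phi$ in \eqref{eq:pncnn_map}. Granting this, Lemma~\ref{lemma:push_forward_generic} applied with $F=\Phi$ and the Gaussian state $\ket{\zeta_{\cal D}}=\ket{\bm m',C'}$ of Lemma~\ref{lemma:qgp}---whose $\widehat{\bm\varphi}$-marginal is exactly ${\cal GP}(\mu',k')$, the auxiliary channel registers of Sec.~\ref{sec:q_arch} being initialized to vacuum so as to match the classical input shape---immediately yields $|\braket{\bm\varphi}{\widehat{U}_{\text{NN}}\,\zeta_{\cal D}}|^2=(\Phi\#{\cal GP}(\mu',k'))(\bm\varphi)$. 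The logit statement then follows because $\widehat{\varphi}_c$ is diagonal: writing $\ket{\Psi_{\rm out}}=\widehat{U}_{\text{NN}}\ket{\zeta_{\cal D}}$, one has $l_c=\bra{\Psi_{\rm out}}\widehat{\varphi}_c\ket{\Psi_{\rm out}}=\mathbb{E}_{\bm\varphi\sim|\braket{\bm\varphi}{\Psi_{\rm out}}|^2}[\varphi_c]=\mathbb{E}_{\bm\varphi^{(1)}\sim{\cal GP}(\mu',k')}[\Phi(\bm\varphi^{(1)})]_c$, which is precisely the $c$-th logit of the PNCNN.

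To establish the operator identity I would first compute the Heisenberg action of each elementary factor on $\widehat{\bm\varphi}$. For a quantum linear layer with the block-diagonal choice of the theorem, Prop.~\ref{prop:d_omega_trafo} gives $\widehat{\omega}(S^{(\ell)})^\dagger\widehat{\bm\varphi}\,\widehat{\omega}(S^{(\ell)})={\cal A}^{(\ell)}\widehat{\bm\varphi}$---crucially \emph{no} $\widehat{\bm\pi}$ appears, precisely because $S^{(\ell)}$ is block-diagonal---while $\widehat{D}(\bm\xi^{(\ell)})^\dagger\widehat{\bm\varphi}\,\widehat{D}(\bm\xi^{(\ell)})=\widehat{\bm\varphi}+\bm b^{(\ell)}$, so composing, $\widehat{U}_{\text{lin}}(\bm\xi^{(\ell)},S^{(\ell)})^\dagger\widehat{\bm\varphi}\,\widehat{U}_{\text{lin}}(\bm\xi^{(\ell)},S^{(\ell)})={\cal A}^{(\ell)}\widehat{\bm\varphi}+\bm b^{(\ell)}={\cal L}^{(\ell)}(\widehat{\bm\varphi})$. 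Likewise $\widehat{U}_{\cal P}^\dagger\widehat{\bm\varphi}\,\widehat{U}_{\cal P}={\cal P}\widehat{\bm\varphi}$, and by Props.~\ref{prop:u_sigma}--\ref{prop:ode_phi} (Lemma~\ref{lemma:q_softplus} in the softplus case) $\widehat{U}_\sigma^\dagger\widehat{\varphi}_{x,a}\widehat{U}_\sigma=\sigma(\widehat{\varphi}_{x,a})$, pointwise.

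I would then compose these, using that conjugation by a unitary is an algebra homomorphism, so $\widehat{U}^\dagger g(\widehat{\bm\varphi})\widehat{U}=g(\widehat{U}^\dagger\widehat{\bm\varphi}\,\widehat{U})$ for any power series $g$. Expanding $\widehat{U}_{\text{NN}}^\dagger\widehat{\varphi}_c\widehat{U}_{\text{NN}}$ from the innermost factor outward, the block structure of every $S^{(\ell)}$ and of $S={\cal P}\oplus({\cal P}^{-1})^T$ guarantees that each intermediate operator is a function of $\widehat{\bm\varphi}$ alone, so the pointwise non-linearity conjugation always reduces to $\widehat{\bm\varphi}\mapsto\sigma(\widehat{\bm\varphi})$ and never invokes the more complicated $\widehat{\pi}$ equation of motion \eqref{eq:eom_nl_pi}. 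Since the Heisenberg ordering is the reverse of the Schr\"odinger ordering of the layers, which in turn is the right-to-left reading of the composition in \eqref{eq:pncnn_map}, the successive substitutions rebuild ${\cal L}^{(L)}\circ{\cal P}\circ\sigma\circ{\cal L}^{(L-1)}\circ\cdots\circ\sigma\circ{\cal L}^{(0)}=\Phi$ evaluated at $\widehat{\bm\varphi}$, giving the identity. The only non-bookkeeping point is that Lemma~\ref{lemma:push_forward_generic} needs $\Phi$ invertible: ${\cal P}$ is invertible by construction, $\sigma$ is a strictly monotone ODE flow, and each ${\cal A}^{(\ell)}$ is invertible---forced anyway by the theorem's definition of $S^{(\ell)}$, after padding the first layer with the auxiliary vacuum channels so that ${\cal A}^{(0)}$ is square.

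The \textbf{main obstacle} is therefore not an estimate or inequality but exactly this bookkeeping: keeping the composition order straight across the Heisenberg/Schr\"odinger switch, and verifying that the chosen block-diagonal symplectic matrices prevent any $\widehat{\bm\pi}$ from ever being generated under conjugation---which is what lets the quantum non-linearity act as a clean classical pushforward and makes the quantum--classical duality exact.
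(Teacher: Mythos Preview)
Your proposal is correct and follows exactly the paper's approach: establish $\widehat{U}_{\text{NN}}^\dagger\widehat{\bm\varphi}\,\widehat{U}_{\text{NN}}=\Phi(\widehat{\bm\varphi})$ via Prop.~\ref{prop:d_omega_trafo} for the linear factors and Props.~\ref{prop:u_sigma}--\ref{prop:ode_phi} for $\widehat{U}_\sigma$, then invoke Lemma~\ref{lemma:push_forward_generic} with $F=\Phi$. Your write-up is considerably more explicit than the paper's two-line proof---in particular you spell out why the block-diagonal $S^{(\ell)}$ keep $\widehat{\bm\pi}$ out of the picture, why the Heisenberg composition order reproduces \eqref{eq:pncnn_map}, and why $\Phi$ is invertible---but the logical skeleton is identical.
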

\begin{proof}
Recalling the linear transformation law of Prop.~\ref{prop:d_omega_trafo} and the non-linear action of $\widehat{U}_\sigma$, the proof then follows from lemma \ref{lemma:push_forward_generic} with $F=\Phi$.
\end{proof}



\section{The Semiclassical Limit}
\label{sec:classical_simulation}

Let us denote by  $\bm{R} = (\bm{\varphi}, \bm{\pi})$ the classical fields corresponding to the quantum operators introduced above. Recall that notationally we distinguish between classical and quantum fields by the absence or presence of a hat.

Note that despite producing entangled states, the quantum linear layers acting on Gaussian states can be simulated efficiently on a classical computer as the action amounts to the matrix multiplications of Prop.~\ref{prop:d_omega_trafo}. See also \cite{Bartlett_2002}. In fact, at the linear level the only difference between a quantum evolution and a probabilistic classical evolution of a Gaussian Liouville measure in phase space is the covariance condition $C+iJ\ge0$ (Prop.~\ref{prop:C}) coming from the non-commutativity of position and momenta in quantum mechanics. We remark also that this condition is preserved by classical evolution thanks to the symplectic nature of classical mechanics \cite{de2009symplectic}.
In Sec.~\ref{sec:qft_repr} we showed how restricting the linear layers to block diagonal matrices led to the classical model, which corresponds to the push forward of an initial Gaussian Liouville distribution under a neural network, but this time only involving the $\bm{\varphi}$ field.
In this section, as an intermediate step towards studying the full quantum model, we change the non-linearity in such a way that the modified model corresponds to the push forward of a initial Gaussian measure under a neural network, involving both the $\bm{\varphi}$ and
the $\bm{\pi}$ fields.
We can interpret the resulting model as a semi-classical limit of the quantum model since it uses elements of quantum mechanics (uncertainty relation for the covariance) as well as classical mechanics (for the non-linearity).
Conceptually, we have the inclusion of models as special cases depicted in figure \ref{fig:hierarchy}.

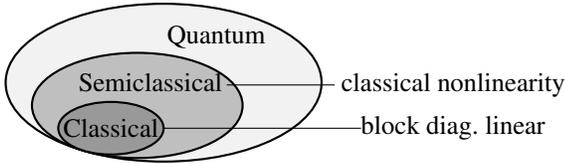
\begin{figure}[ht]
\vskip -0.1in
\begin{center}
\begin{tikzpicture}[scale=.7]
\def\angle{60}%
\pgfmathsetlengthmacro{\xoff}{2cm*cos(\angle)}%
\pgfmathsetlengthmacro{\yoff}{1cm*sin(\angle)}%
\draw [thick, fill=gray!10] (\xoff,\yoff) circle[x radius=3cm, y radius=1.5cm] ++(1*\xoff,1*\yoff) node{Quantum};
\draw [thick, fill=gray!50] (0.5*\xoff,0.5*\yoff) circle[x radius=2cm, y radius=1cm] ++(.25*\xoff,.5*\yoff) node{Semiclassical};
\draw [thick, fill=gray!80] (0,0) circle[x radius=1cm, y radius=.5cm] node{Classical};
\draw (2.2*\xoff,.95*\yoff) -- (4.25*\xoff,.95*\yoff);
\draw (6.5*\xoff,.95*\yoff) node{classical nonlinearity};
\draw (1*\xoff,0) -- (4.75*\xoff,0);
\draw (6.5*\xoff,0) node{block diag.~linear};
\end{tikzpicture}
\caption{
Hierarchy of the neural networks considered.
}
\label{fig:hierarchy}
\end{center}
\vskip -0.2in
\end{figure}

We define the semiclassical model by simply replacing $\widehat{U}_\sigma$ with a classical Hamiltonian evolution under which the phase space measure evolves into a new classical phase space measure, which corresponds to the classical limit of the equation of motion \cite{QM_book}.

\begin{prop}\label{prop:u_sigma_cl}
Under the classical time evolution generated by the Hamiltonian
\begin{align}
\label{eq:hf_cl}
    H &= 
    \sum_{x,a}\pi_{x,a} f(\varphi_{x,a})
    \,,
\end{align}
the fields transform as:
\begin{align}
\varphi_{x,a}(t)
&=F^{-1}
\left(F(\varphi_{x,a}(0)) + t\right)
\,,\\
\pi_{x,a}(t)
&=
\pi_{x,a}(0)
\frac{f(\varphi_{x,a}(0))}
{f(\varphi_{x,a}(t))}
\end{align}
where $F'(x) = 1/f(x)$.
\end{prop}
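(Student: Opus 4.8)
The plan is to follow the classical counterpart of the argument for Proposition \ref{prop:ode_phi}, with Hamilton's equations playing the role that the Heisenberg equation \eqref{eq:eom} played there. First I would write down the equations of motion for the Hamiltonian \eqref{eq:hf_cl}. Since $H=\sum_{x,a}\pi_{x,a}f(\varphi_{x,a})$ is a sum of terms each involving only a single mode $(x,a)$, the dynamics decouples across modes, and it suffices to analyze the one-mode system with Hamiltonian $h=\pi f(\varphi)$. Hamilton's equations give $\dot\varphi=\partial h/\partial\pi=f(\varphi)$ and $\dot\pi=-\partial h/\partial\varphi=-\pi f'(\varphi)$. Note that the first of these is exactly the classical limit of \eqref{eq:eom_nl_phi} (the symmetrization in the quantum Hamiltonian \eqref{eq:hf} collapses because classical fields commute), and the second is the classical limit of \eqref{eq:eom_nl_pi}.

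Next I would integrate these ODEs. The equation $\dot\varphi=f(\varphi)$ is separable: writing $F$ for an antiderivative of $1/f$, we get $\tfrac{d}{dt}F(\varphi(t))=\dot\varphi(t)/f(\varphi(t))=1$, hence $F(\varphi(t))=F(\varphi(0))+t$ and $\varphi(t)=F^{-1}(F(\varphi(0))+t)$, which is the claimed first formula. (Here $F$ is strictly monotone along the trajectory because $1/f$ has constant sign there, so $F^{-1}$ is well defined; one may alternatively just verify the formula by differentiation, as is done for Proposition \ref{prop:ode_phi}.) For $\pi$, the key observation is that $\tfrac{d}{dt}\log f(\varphi(t))=f'(\varphi(t))\dot\varphi(t)/f(\varphi(t))=f'(\varphi(t))$, so $\dot\pi/\pi=-f'(\varphi(t))=-\tfrac{d}{dt}\log f(\varphi(t))$; integrating from $0$ to $t$ gives $\log\big(\pi(t)/\pi(0)\big)=\log\big(f(\varphi(0))/f(\varphi(t))\big)$, i.e. $\pi(t)=\pi(0)\,f(\varphi(0))/f(\varphi(t))$, the claimed second formula. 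Reinstating the mode indices $(x,a)$ yields the proposition.

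Alternatively, and perhaps more in keeping with the style of the rest of the paper, one can take the stated solutions as an ansatz and simply check that they satisfy $\dot\varphi_{x,a}=f(\varphi_{x,a})$ and $\dot\pi_{x,a}=-\pi_{x,a}f'(\varphi_{x,a})$ together with the correct initial conditions at $t=0$; uniqueness for the (locally Lipschitz) ODE system then finishes the argument. I do not expect any real obstacle here: the only point requiring a word of care is the well-definedness of $F^{-1}$, which holds on the interval of times for which the trajectory stays in a region where $f$ does not vanish — exactly the caveat already implicit in Proposition \ref{prop:ode_phi}.
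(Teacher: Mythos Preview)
Your proposal is correct and matches the paper's approach: the paper's proof simply records Hamilton's equations $\dot\varphi_{x,a}=f(\varphi_{x,a})$ and $\dot\pi_{x,a}=-\pi_{x,a}f'(\varphi_{x,a})$ and states that one checks the claimed formulas against them, exactly as in your alternative paragraph. Your explicit integration of the separable ODEs is a slight elaboration beyond what the paper writes, but the underlying argument is the same.
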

\begin{proof}
The proof follows by simply checking the classical equations of motion:
\begin{align*}
    \dot{\varphi}_{x,a}(t)
    =
    f(\varphi_{x,a}(t))
    \,,\quad
    \dot{\pi}_{x,a}(t)
    =
    -
    \pi_{x,a}(t) f'(\varphi_{x,a}(t))
\end{align*}
\end{proof}

We note that classical and quantum equations of motions and solutions (Prop.~\ref{prop:ode_phi}) look identical. This is a consequence of the correspondence between quantum and classical mechanics under the identification 
$[\widehat{A},\widehat{H}] \longleftrightarrow i\hbar\{A,H\}$, where $\{\cdot, \cdot\}$ is the Poisson bracket.

In particular the following is the classical counterpart of lemma \ref{lemma:q_softplus}.
\begin{prop}[Symplectic Softplus]\label{lemma:s_softplus}
Replacing operators with classical variables 
in lemma \ref{lemma:q_softplus} we have
\begin{align}
    \label{eq:u_sigma_cl}
    U_\sigma 
    :
    \begin{pmatrix}
    \bm{\varphi}
    \\
    \bm{\pi}
    \end{pmatrix}
    \mapsto
    \begin{pmatrix}
    \frac{1}{\beta}\log(1 + \e^{\beta\bm{\varphi}})
    \\
    \bm{\pi}
    (1 + \e^{-\beta\bm{\varphi}})
    \end{pmatrix}
    \,.
\end{align}
\end{prop}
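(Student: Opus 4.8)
The plan is to specialize Proposition~\ref{prop:u_sigma_cl} to the softplus case, exactly as Lemma~\ref{lemma:q_softplus} specialized Proposition~\ref{prop:ode_phi} in the quantum setting. First I would identify the generating function: the Hamiltonian \eqref{eq:h_softplus} becomes, upon replacing operators with classical variables and noting that the symmetrized product collapses since classical variables commute, $H = \sum_{x,a} \pi_{x,a}\, f(\varphi_{x,a})$ with $f(x) = \beta^{-1}\e^{-\beta x}$. This is precisely the form \eqref{eq:hf_cl} required by Proposition~\ref{prop:u_sigma_cl}.

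Next I would compute the two auxiliary functions. The inverse of $f$ is $1/f(x) = \beta\,\e^{\beta x}$, whose antiderivative is $F(x) = \e^{\beta x}$; hence $F^{-1}(y) = \beta^{-1}\log y$. Plugging into the $\varphi$-solution of Proposition~\ref{prop:u_sigma_cl} at $t = 1$ gives $\varphi(1) = F^{-1}(F(\varphi(0)) + 1) = \beta^{-1}\log(\e^{\beta \varphi(0)} + 1)$, which is $\sigma_\beta(\varphi(0))$ as in \eqref{eq:softplus}. For the momentum, the $\pi$-solution gives $\pi(1) = \pi(0)\, f(\varphi(0))/f(\varphi(1)) = \pi(0)\, \e^{-\beta\varphi(0)}/\e^{-\beta\varphi(1)} = \pi(0)\, \e^{\beta(\varphi(1) - \varphi(0))}$. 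Substituting $\e^{\beta\varphi(1)} = 1 + \e^{\beta\varphi(0)}$ yields $\pi(1) = \pi(0)\,(1 + \e^{\beta\varphi(0)})\e^{-\beta\varphi(0)} = \pi(0)(1 + \e^{-\beta\varphi(0)})$, matching the second component of \eqref{eq:u_sigma_cl}. Writing this componentwise over all $x,a$ gives the stated map.

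There is essentially no obstacle here: the statement is the classical shadow of an already-proven quantum lemma, and the computation is the same substitution carried out in Lemma~\ref{lemma:q_softplus} plus one extra elementary manipulation to put the momentum factor in the form $1 + \e^{-\beta\varphi}$. The only point requiring a word of care is that the classical Hamiltonian \eqref{eq:hf_cl} has no ordering ambiguity, so the symmetrization present in \eqref{eq:h_softplus} is irrelevant classically — this is exactly the remark made in the text that the classical and quantum equations of motion ``look identical'' under $[\widehat{A},\widehat{H}] \longleftrightarrow i\hbar\{A,H\}$. I would state the proof in two or three lines: invoke Proposition~\ref{prop:u_sigma_cl} with $f(x) = \beta^{-1}\e^{-\beta x}$, record $F(x) = \e^{\beta x}$ and $F^{-1}(y) = \beta^{-1}\log y$, and then read off the $\varphi$ and $\pi$ evolutions at $t=1$, simplifying the momentum factor using $\e^{\beta\varphi(1)} = 1 + \e^{\beta\varphi(0)}$.
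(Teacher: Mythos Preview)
Your proposal is correct and matches the paper's approach exactly: the paper treats this proposition as the immediate classical specialization of Lemma~\ref{lemma:q_softplus} via Proposition~\ref{prop:u_sigma_cl}, and your identification of $f(x)=\beta^{-1}\e^{-\beta x}$, $F(x)=\e^{\beta x}$, $F^{-1}(y)=\beta^{-1}\log y$ together with the simplification of the momentum factor is precisely the computation intended.
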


We then define a neural network that pushes forward the input Gaussian Liouville distribution on phase space to an output distribution $p_{\text{out}}$ by alternating linear layers with non-linear classical layers $U_\sigma$. Analogously to the original construction of \eqref{eq:q_logits}, its mean is then used as the logits for classification:
\begin{align}
    l_c
    =
    \mathbb{E}_{\varphi_c\sim p_{\text{out}}}(\varphi_c)
    \,.
\end{align}

The semiclassical neural network is a type of learnable Hamiltonian flow \cite{bondesan2019learning,rezende2019equivariant,toth2020hamiltonian}.
Adding momenta can be interpreted as an augmentation strategy for neural ODEs \cite{dupont2019augmented,massaroli2021dissecting}.
However we do not expect that the semiclassical neural network outperforms the classical ones of \cite{li2016scalable,finzi2020probabilistic}, since already in the classical case we avoid the expressivity restrictions of ODEs thanks to the presence of extra channels.

We implemented and tested the semiclassical neural network with symplectic softplus non-linearity.
The experiments are performed on a simple classification task for irregularly sampled time series. 
As already remarked, we do not claim that the semi-classical network performs better than a classical probabilistic numeric neural network with similar capacity. We therefore emphasize that the purpose of our experiments is simply to check that after adding momenta and using a different non-linearity than standard, the model performs similarly to a classical baseline.
In the implementation we saw that taking $\beta=0.1$ for the non-linearity avoids numerical instabilities for negative $\bm{\varphi}$ due to the presence of $\e^{-\beta \bm{\varphi}}$ in \eqref{eq:u_sigma_cl}.
We refer to Appendix \ref{sec:experiments} for details of architecture, task and learning algorithm tested.
We leave the study of the fully quantum case as an outstanding open problem which will most likely be tackled only when error corrected quantum optical computers will become available.




\begin{figure*}[ht]
\begin{center}
\centerline{\includegraphics[width=\textwidth]{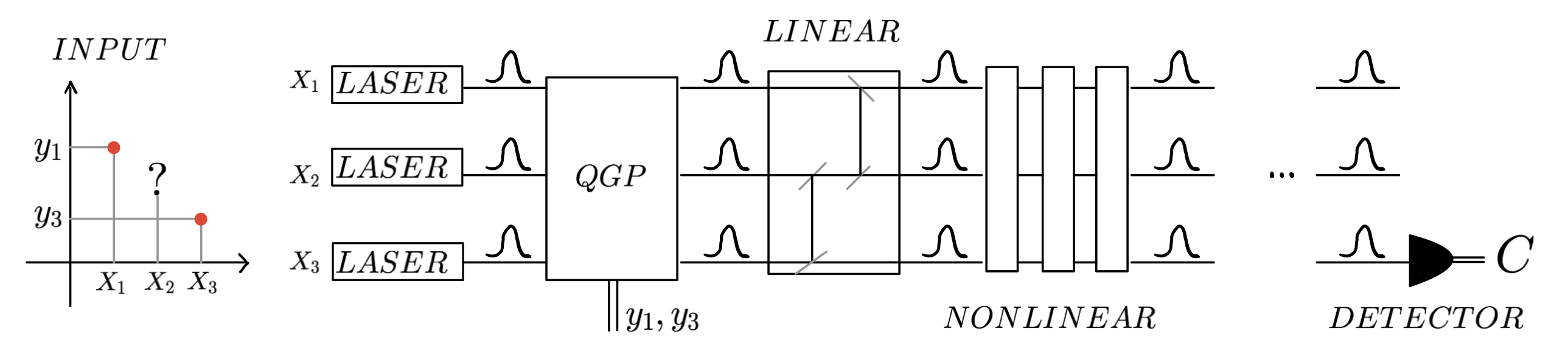}}
\caption{
High level depiction of the implementation of our model in quantum optical hardware.
The input on the left are observations $y_1, y_3$ of a signal at locations $x_1, x_3$ -- information at the intermediate value $x_2$ is missing. We then prepare laser beams for all locations $x_1,x_2,x_3$ and use quantum GP (QGP) inference to create a posterior state. We then apply a series of linear and non-linear layers, till we measure an observable with a detector to get a class $C$ for classifying the input signal.
}
\label{fig:pipeline}
\end{center}
\vskip -0.2in
\end{figure*}

\section{Quantum Optical Implementation}
\label{sec:quantum_optics}

We here describe practical considerations for implementing our model on an optical quantum computer.
Optical devices are attractive since one can implement matrix multiplication in a very efficient way \cite{shen2017deep} and quantum optical implementations of quantum neural networks have been discussed in \cite{killoran2019,steinbrecher2019quantum, PhysRevLett.118.080501, PhysRevA.97.022315}.
A graphical depiction of our proposal is in Fig.~\ref{fig:pipeline}.
While the implementation of linear layers is common to these works, the quantum non-linearities and the data embedding in Gaussian states are original to our work.

\subsection{State preparation}
\label{sec:state_prep}

The state preparation step encodes the input signal ${\cal D}=\{(x_i, y_i)\}_{i=1}^N$ into quantum registers. 
We compute the GP posterior as in \eqref{eq:posterior} for a set of points $x\in {\cal X}$ and then we create an input Gaussian state by acting on the vacuum state $\ket{\Omega}$ defined in Sec.~\ref{sec:hintons} with the linear layer $\widehat{D}(\bm{\xi}=(\bm{\mu}', 0))\widehat{\omega}(S=A\oplus A^{-1})$, where $A$ is a square root of $k'$.
The vacuum state can be created using lasers \cite{NielsenChuang} and we defer to the next section the implementation of the linear layer. 
This procedure incurs a complexity similar to the classical GP inference, that is $O(N^3)$.
In Appendix \ref{sec:state_prep_app} we give further comments on the state preparation step.
We see developing a more efficient quantum implementation of quantum GP inference as in interesting future direction.


\subsection{Linear layer}
\label{sec:linear_layer_impl}

The implementation of a unitary $\widehat{\omega}(S)$ on a quantum optical computer is a well studied problem and we limit ourselves here to simply say that it can be done by a sequence of beamsplitters and phase shifters. Appendix \ref{sec:linear_layer_impl_app} provides more details of the procedure and its complexity.

\subsection{Non linearity}
\label{sec:non_linear_layer_impl}

Quantum computers can perform arbitrary computations if given a set of universal gates.
For quantum optical computers, one can take the quadratic Hamiltonians and the cubic Hamiltonian $\widehat{\varphi}^3$ \cite{Lloyd_1999}.
We can implement a non-linearity with Hamiltonian
\eqref{eq:hf} by approximating it with the truncation of the Taylor series of the function $f$ to order $k$:
\begin{align*}
    \widehat{H}^{(k)}
    =
    \sum_{x,a}
    \sum_{\ell=0}^k
    f_\ell 
    \widehat{H}_\ell
    \,,\quad
    \widehat{H}_\ell
    =
    \frac{1}{2}
    \left(
    \widehat{\pi}_{x,a} \widehat{\varphi}_{x,a}^\ell
    +
    \widehat{\varphi}_{x,a}^\ell\widehat{\pi}_{x,a} 
    \right)
    \,.
\end{align*}
We can then use the standard procedures for quantum simulation \cite{NielsenChuang} to implement $\e^{i\widehat{H}^{(k)}}$.
This is detailed in Appendix
\ref{sec:non_linear_layer_impl_app}, where we derive an explicit decomposition in terms of gates which use only the universal Hamiltonians $\widehat{\pi}_{x,a}$, $\widehat{\pi}^2_{x,a}$ and $\widehat{\varphi}_{x,a}^3$. This fully specifies a protocol to implement our model on a quantum optical device.

\section{Conclusion and Outlook}
\label{sec:discussion}

In this paper we introduced a new formalism for deep learning in terms of quantum fields. We made the following contributions: 1) we showed how to use Gaussian states for Bayesian inference, 2) we  devised unitary operators that implement standard non-linearities, 3) we presented tractable limits of the quantum network, 4) we discussed how to implement our models on a quantum computer. 

We implemented the semi-classical architecture to check that it  performed as expected (which it did). 
But because we do not expect that these tractable limits perform any better than classical models we choose not to benchmark them against each other.

Exciting directions for the near future are: studying approximate solutions that get closer to the full quantum model; finding efficient ways to do quantum GP inference on quantum hardware; developing further quantum non-linearities and the quantum formalism for classical models.




\newpage
\bibliography{references}
\bibliographystyle{icml2021}

\newpage
\appendix
\section{Proofs of Section \ref{sec:background}}
\label{sec:proofs_sec_background}

\begin{prop}
\label{prop:C_app}
The covariance matrix $C$ satisfies:
\begin{align}
    C=C^T\,,\quad
    C>0\,,\quad
    C + i J \ge 0\,.    
\end{align}
\end{prop}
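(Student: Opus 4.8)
The plan is to prove the three claims in order, reducing everything to the positivity of norms in $\mathcal{H}$. The symmetry $C = C^T$ is immediate: the defining expression \eqref{eq:cov} is manifestly invariant under $i \leftrightarrow j$, since it involves the symmetrized product $\tfrac12(\widehat R_i \widehat R_j + \widehat R_j \widehat R_i)$ together with the symmetric term $m_i m_j$.

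For the remaining two conditions I would first pass to the centered operators $\widehat R_i' = \widehat R_i - m_i \id$, which still obey $[\widehat R_i', \widehat R_j'] = iJ_{ij}$ (the constants drop out of the commutator) and for which $C_{ij} = \bra{\Psi}(\widehat R_i'\widehat R_j' + \widehat R_j'\widehat R_i')\ket{\Psi}$. Then, for an arbitrary $\bm z \in \mathbb C^{2|\mathcal X|}$, set $\widehat A = \sum_j z_j \widehat R_j'$, so that $\widehat A^\dagger = \sum_i \bar z_i \widehat R_i'$ by self-adjointness of the $\widehat R_i'$. The key step is to expand $\widehat R_i' \widehat R_j' = \tfrac12\{\widehat R_i',\widehat R_j'\} + \tfrac i2 J_{ij}$ and take the expectation value, which gives
\[
0 \le \|\widehat A \ket{\Psi}\|^2 = \bra{\Psi}\widehat A^\dagger \widehat A\ket{\Psi} = \tfrac12\, \bm z^\dagger (C + iJ)\,\bm z .
\]
Since $\bm z$ is arbitrary, this is exactly the statement $C + iJ \ge 0$; specializing to real $\bm z$, where the $iJ$ contribution vanishes because $J$ is antisymmetric, yields $C \ge 0$ as a byproduct.

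To upgrade $C \ge 0$ to the strict inequality $C > 0$, I would show that $C$ has trivial kernel, using the inequality $C + iJ \ge 0$ just established. If $\bm v$ is a real vector with $C\bm v = 0$, then $\bm v^\dagger (C + iJ)\bm v = \bm v^T C \bm v + i \,\bm v^T J \bm v = 0$, the second term vanishing by antisymmetry of $J$. But for a positive semidefinite Hermitian matrix any vector of zero expectation lies in its kernel, so $(C + iJ)\bm v = 0$; taking the imaginary part gives $J\bm v = 0$, and since $J$ is invertible this forces $\bm v = 0$. Hence $\ker C = \{0\}$, and together with $C \ge 0$ this gives $C > 0$.

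The main obstacle — indeed essentially the only non-routine point — is obtaining the strict positivity $C > 0$ rather than merely $C \ge 0$; this is exactly where the uncertainty-type bound $C + iJ \ge 0$ and the nondegeneracy of the symplectic form $J$ are genuinely used. A secondary technical caveat I would flag but not dwell on is that $\widehat\varphi_x, \widehat\pi_x$ are unbounded operators, so the manipulations above tacitly assume $\ket{\Psi}$ lies in the domain on which the relevant quadratic expressions are well defined; this is the standard setting for Gaussian states and causes no real difficulty.
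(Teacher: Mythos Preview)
Your argument is correct and in fact cleaner and more general than what the paper does. The paper's proof treats only the case $|\mathcal X|=1$: it writes out the $2\times 2$ covariance matrix explicitly and observes that $C+iJ\ge 0$ there is equivalent to $\det(C+iJ)\ge 0$, which is the Robertson--Schr\"odinger uncertainty relation; for general $|\mathcal X|$ it simply cites \cite{PhysRevA.36.3868}. Your approach instead proves the full statement directly in arbitrary dimension, via the single positivity inequality $\|\widehat A\ket{\Psi}\|^2\ge 0$ with $\widehat A=\sum_j z_j(\widehat R_j-m_j)$. This is the standard route to the multimode uncertainty relation and is strictly more informative than the paper's version. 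Your extraction of the \emph{strict} positivity $C>0$ from $C+iJ\ge 0$ together with the nondegeneracy of $J$ is also a point the paper does not spell out at all. The domain caveat you flag is appropriate and harmless in this setting.
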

\begin{proof}
We give the proof for $|{\cal X}|=1$ and refer the reader to  \cite{PhysRevA.36.3868} for the general case. Suppressing the $x$ index and denoting $\langle \cdot \rangle = \bra{\psi}\cdot \ket{\psi}$, we have:
\begin{align}
C
=
\begin{pmatrix}
2\sigma_{\widehat{\varphi}}^2 & 
\langle \widehat{\varphi}\widehat{\pi}
+\widehat{\pi}\widehat{\varphi}\rangle
-
2
\langle \widehat{\varphi}\rangle
\langle \widehat{\pi} \rangle
\\
\langle \widehat{\varphi}\widehat{\pi}
+\widehat{\pi}\widehat{\varphi}\rangle
-
2
\langle \widehat{\varphi}\rangle
\langle \widehat{\pi} \rangle
&
2\sigma_{\widehat{\pi}}^2
\,,
\end{pmatrix}
\end{align}
where 
$\sigma_{\widehat{\varphi}}^2
=
\langle \widehat{\varphi}^2\rangle
-
\langle \widehat{\varphi}\rangle^2
$ and similarly for $\widehat{\pi}$.
For $|{\cal X}|=1$,
$C + i J\ge 0$ is equivalent to $\Det(C + i J)\ge 0$ or
\begin{align}
&4\sigma_{\widehat{\varphi}}^2    
\sigma_{\widehat{\pi}}^2
\ge
(\langle \widehat{\varphi}\widehat{\pi}
+\widehat{\pi}\widehat{\varphi}\rangle
-
2
\langle \widehat{\varphi}\rangle
\langle \widehat{\pi} \rangle)^2
+1\,.
\end{align}
This is the uncertainty relation in the stronger form due to Robertson–Schroedinger, proving the statement for $|{\cal X}|=1$.
\end{proof}


\begin{prop}
\label{prop:d_omega_trafo_app}
The unitaries of \eqref{eq:D} and \eqref{eq:omega} represent symplectic affine transformations of the canonical operators
\begin{align}
\widehat{D}(\bm{\xi})^\dagger
\widehat{\bm{R}}
\widehat{D}(\bm{\xi})
&=
\widehat{\bm{R}}
+
\bm{\xi}
\\
\widehat{\omega}(S)^\dagger
\widehat{\bm{R}}\,
\widehat{\omega}(S)
&=
S \widehat{\bm{R}}
\,.
\end{align}
\end{prop}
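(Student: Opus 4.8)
The plan is to conjugate each canonical operator $\widehat R_k$ by the unitary and evaluate the result with the Hadamard lemma, i.e.\ the Baker--Campbell--Hausdorff expansion \eqref{eq:bch}, using only the commutation relations \eqref{eq:ccr} together with the elementary identity $J^{T}J=\id$ (equivalently $J^{2}=-\id$). In both cases the generator in the exponent is self-adjoint --- for $\widehat D$ this is immediate, and for $\widehat\omega$ it was already checked above from the Lie-algebra condition $JX+X^{T}J=0$ --- so $\widehat D(\bm\xi)^{\dagger}=\e^{-i\widehat{\bm R}^{T}J\bm\xi}$ and $\widehat\omega(S)^{\dagger}=\e^{-\frac{i}{2}\widehat{\bm R}^{T}JX\widehat{\bm R}}$, and conjugation reduces to $\e^{-iG}\widehat R_{k}\e^{iG}=\sum_{n\ge 0}\frac{1}{n!}(\mathrm{ad}_{-iG})^{n}\widehat R_{k}$.

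For the displacement I would take $G=\widehat{\bm R}^{T}J\bm\xi=\sum_{i,j}\widehat R_{i}J_{ij}\xi_{j}$. Using \eqref{eq:ccr}, $[G,\widehat R_{k}]=\sum_{i,j}J_{ij}\xi_{j}[\widehat R_{i},\widehat R_{k}]=i\sum_{i,j}J_{ik}J_{ij}\xi_{j}=i\xi_{k}$, where the last step uses $\sum_{i}J_{ik}J_{ij}=(J^{T}J)_{kj}=\delta_{kj}$. Since this commutator is a c-number, the BCH series truncates after the first term and $\widehat D(\bm\xi)^{\dagger}\widehat R_{k}\widehat D(\bm\xi)=\widehat R_{k}-i(i\xi_{k})=\widehat R_{k}+\xi_{k}$, which is the first claim.

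For the symplectic transform I would take $G=\frac{1}{2}\widehat{\bm R}^{T}JX\widehat{\bm R}=\frac12\sum_{i,j}\widehat R_{i}(JX)_{ij}\widehat R_{j}$. Expanding with $[\widehat R_{i}\widehat R_{j},\widehat R_{k}]=i\widehat R_{i}J_{jk}+iJ_{ik}\widehat R_{j}$ gives two terms; on one of them $J^{T}J=\id$ collapses the sum to $\frac{i}{2}(X\widehat{\bm R})_{k}$, and on the other the Lie-algebra condition in the form $X^{T}=JXJ$ does the same, so in total $[G,\widehat R_{k}]=i(X\widehat{\bm R})_{k}$. Consequently $(\mathrm{ad}_{-iG})^{n}\widehat R_{k}=(X^{n}\widehat{\bm R})_{k}$, and summing the Hadamard series yields $\widehat\omega(S)^{\dagger}\widehat R_{k}\widehat\omega(S)=(\e^{X}\widehat{\bm R})_{k}=(S\widehat{\bm R})_{k}$, the second claim.

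I expect the only genuine obstacle to be the index bookkeeping in the $\widehat\omega$ case: one has to symmetrize the two contributions produced by $[\widehat R_{i}\widehat R_{j},\widehat R_{k}]$, and it is precisely this symmetrization that requires the symplectic Lie-algebra identity $JX+X^{T}J=0$; everything else is a mechanical application of \eqref{eq:bch} and \eqref{eq:ccr}.
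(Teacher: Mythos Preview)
Your proposal is correct and follows exactly the same approach as the paper: compute the single commutators $[i\widehat{\bm R}^{T}J\bm\xi,\widehat R_{k}]=-\xi_{k}$ and $[-\tfrac{i}{2}\widehat{\bm R}^{T}JX\widehat{\bm R},\widehat R_{k}]=(X\widehat{\bm R})_{k}$, then sum the BCH series. The paper's proof simply states these two commutators without derivation, whereas you have (correctly) worked out the index bookkeeping, including the use of $J^{T}J=\id$ and the Lie-algebra identity $X^{T}=JXJ$ to symmetrize the two pieces of $[\widehat R_{i}\widehat R_{j},\widehat R_{k}]$.
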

\begin{proof}
The result then follows from the Baker-Campbell-Hausdorff formula 
\begin{align}
\label{eq:bch_app}
\e^A B \e^{-A} = 
\e^{[A,\cdot]}(B) =
B + [A, B] + \tfrac{1}{2}[A,[A,B]] + \cdots
\end{align}
and the commutators:
\begin{align*}
[i\widehat{\bm{R}}^T J\bm{\xi}, \widehat{R}_i]
=
- \xi_i
\,,\quad
[-\frac{i}{2} \widehat{\bm{R}}^T J X 
    \widehat{\bm{R}}, \widehat{R}_i]
=
(X \widehat{\bm{R}})_i
\,.
\end{align*}
\end{proof}

\begin{prop}
\label{prop:gaussian_trafo_app}
Under the unitaries of \eqref{eq:D} and \eqref{eq:omega}, the Gaussian states transform as
\begin{align}
\widehat{D}(\bm{\xi})
\ket{\bm{m}, C}
&=
\ket{\bm{m}+\bm{\xi}, C}
\\
\widehat{\omega}(S)
\ket{\bm{m}, C}
&=
\ket{S\bm{m}, SCS^T}
\,.
\end{align}
\end{prop}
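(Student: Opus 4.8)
The plan is to derive the transformation of Gaussian states from the already-established transformation of the canonical operators (Prop.~\ref{prop:d_omega_trafo_app}), using the fact, stated in Sec.~\ref{sec:gaussian_states}, that a Gaussian state is \emph{uniquely} determined by its mean $\bm m$ and covariance $C$. So it suffices to (i) check that $\widehat D(\bm\xi)\ket{\bm m,C}$ and $\widehat\omega(S)\ket{\bm m,C}$ are again Gaussian, and (ii) compute their mean and covariance and match them against the claimed values.

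For step (i), the cleanest route is to recall that $\ket{\bm m,C}$ is characterized (up to phase) by being annihilated by a suitable set of linear combinations of the $\widehat R_i$ — equivalently, its wave-function $\braket{\bm\varphi}{\bm m,C}$ is a (complex) Gaussian, and acting by $\widehat D$ translates the argument while acting by $\widehat\omega$ applies a linear symplectic change of variables, both of which preserve Gaussianity. Alternatively, and more in the spirit of the operator manipulations used elsewhere in the paper, one can invoke that the state $\widehat U\ket{\bm m,C}$ has characteristic (Weyl) function equal to that of $\ket{\bm m,C}$ precomposed with the affine/linear map from Prop.~\ref{prop:d_omega_trafo_app}, and a state whose characteristic function is Gaussian is a Gaussian state. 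Either way this is a short argument once the operator transformation law is in hand.

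For step (ii), I would compute directly. Write $\ket{\Psi'} = \widehat D(\bm\xi)\ket{\bm m,C}$. Then the new mean is
\begin{align*}
\bra{\Psi'}\widehat{\bm R}\ket{\Psi'}
=
\bra{\bm m,C}\widehat D(\bm\xi)^\dagger \widehat{\bm R}\,\widehat D(\bm\xi)\ket{\bm m,C}
=
\bra{\bm m,C}(\widehat{\bm R}+\bm\xi)\ket{\bm m,C}
=
\bm m + \bm\xi,
\end{align*}
using Prop.~\ref{prop:d_omega_trafo_app} and normalization; the symmetrized second moments are unchanged since conjugation by $\widehat D(\bm\xi)$ merely shifts $\widehat R_i \mapsto \widehat R_i + \xi_i$ and the $\xi_i$ are c-numbers, so the covariance $C$ is unaffected. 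The same calculation for $\ket{\Psi''}=\widehat\omega(S)\ket{\bm m,C}$ gives new mean $\bra{\bm m,C}S\widehat{\bm R}\ket{\bm m,C}=S\bm m$, and for the covariance one expands
\begin{align*}
\tfrac12\bra{\Psi''}(\widehat R_i\widehat R_j+\widehat R_j\widehat R_i)\ket{\Psi''}
=
\tfrac12 S_{ik}S_{jl}\bra{\bm m,C}(\widehat R_k\widehat R_l+\widehat R_l\widehat R_k)\ket{\bm m,C},
\end{align*}
which after subtracting $(S\bm m)_i(S\bm m)_j = S_{ik}S_{jl}m_km_l$ yields $\tfrac12(SCS^T)_{ij}$. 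Combining with step (i) and uniqueness finishes the proof.

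The only real subtlety — the "hard part," such as it is — is step (i): justifying that the image of a Gaussian state under $\widehat D$ or $\widehat\omega$ is still Gaussian, rather than just an arbitrary state with the right first two moments. This is where one must lean on an explicit structural characterization of Gaussian states (Gaussian wave-function, or Gaussian characteristic function, or the annihilation-operator definition), none of which is proved in the excerpt; I would state the chosen characterization as a known fact (with a citation such as \cite{adesso2014continuous} or \cite{Weedbrook_2012}) and note that it is manifestly stable under affine symplectic conjugation of the $\widehat R_i$. Everything after that is the routine moment bookkeeping above.
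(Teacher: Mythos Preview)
Your approach is essentially the same as the paper's: both compute the new mean and covariance by inserting $\widehat D(\bm\xi)^\dagger\widehat{\bm R}\,\widehat D(\bm\xi)$ and $\widehat\omega(S)^\dagger\widehat{\bm R}\,\widehat\omega(S)$ from Prop.~\ref{prop:d_omega_trafo_app} into the definitions \eqref{eq:mean}--\eqref{eq:cov}. Your proof is in fact slightly more careful than the paper's, which carries out the moment bookkeeping but leaves your step (i) --- that the transformed state remains Gaussian --- implicit.
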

\begin{proof}
From proposition \ref{prop:d_omega_trafo} and the definitions \eqref{eq:mean} and \eqref{eq:cov}, we have for any $\ket{\Psi}$:
\begin{align}
    &\bra{\Psi}\widehat{D}(\bm{\xi})^\dagger \widehat{\bm{R}}\widehat{D}(\bm{\xi}) \ket{\Psi}
    =
    \bra{\Psi} (\widehat{\bm{R}} + \bm{\xi}) \ket{\Psi}
    =
    \bm{m} + \bm{\xi}\nonumber
    \\
    &\bra{\Psi}\widehat{\omega}(S)^\dagger \widehat{\bm{R}}\widehat{\omega}(S) \ket{\Psi}
    =
    S
    \bm{m} =\bm{m}'\nonumber
    \\
    &\bra{\Psi} 
    \widehat{\omega}(S)^\dagger
    \tfrac{1}{2}
    (\widehat{R}_i\widehat{R}_j+\widehat{R}_j\widehat{R}_i)
    \widehat{\omega}(S)\ket{\Psi}
    -
    m'_im'_j \nonumber
    =\\
    & \quad
    \sum_{k,l}
    S_{ik}
    \tfrac{1}{2}C_{kl}    
    S_{jl}
    =
    \tfrac{1}{2}(SCS^T)_{ij}
    \,. \nonumber
\end{align}
The result follows by specifying $\ket{\Psi}=\ket{\bm{m},C}$.
\end{proof}

\section{Proofs of Section \ref{sec:qnn}}
\label{sec:proofs_qnn}

\begin{prop}
\label{prop:ode_phi_app}
The ODEs \eqref{eq:eom_nl_phi}, \eqref{eq:eom_nl_pi} have solutions
\begin{align}
\widehat{\varphi}_{x,a}(t)
&=F^{-1}
\left(F(\widehat{\varphi}_{x,a}(0)) + t\right)
\,,\\
\widehat{\pi}_{x,a}(t)
&=
\frac{1}{2}
\left(
\widehat{\pi}_{x,a}(0)
\frac{f(\widehat{\varphi}_{x,a}(0))}
{f(\widehat{\varphi}_{x,a}(t))}
+\text{h.c.}\right)
\end{align}
where $F'(x) = 1/f(x)$.
\end{prop}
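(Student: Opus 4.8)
The plan is to verify directly that the displayed operator-valued formulas (i) reduce to $\widehat{\varphi}_{x,a}(0)$, $\widehat{\pi}_{x,a}(0)$ at $t=0$ and (ii) satisfy the Heisenberg equations \eqref{eq:eom_nl_phi}, \eqref{eq:eom_nl_pi}; since those form a first-order system with prescribed initial data, checking that the formulas solve it is all that is required. As the Hamiltonian $\sum_{x,a}\widehat{H}_{x,a}$ is a sum of mutually commuting single-mode terms, I would drop the indices and work with one pair $(\widehat{\varphi},\widehat{\pi})$. The structural fact I would isolate first is that the proposed $\widehat{\varphi}(t)=F^{-1}(F(\widehat{\varphi}(0))+t)$ is a function of the single operator $\widehat{\varphi}(0)$; consequently $\widehat{\varphi}(t)$, $f(\widehat{\varphi}(t))$, $f'(\widehat{\varphi}(t))$, $f(\widehat{\varphi}(0))$ and their products all commute with one another, and the only non-commutativity at play is the canonical relation $[\widehat{\pi}(0),h(\widehat{\varphi}(0))]=-i\,h'(\widehat{\varphi}(0))$, valid for analytic $h$ by the Taylor-series definition of operator functions.

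\emph{The $\widehat{\varphi}$ equation.} At $t=0$, $F^{-1}(F(\widehat{\varphi}(0)))=\widehat{\varphi}(0)$. Differentiating in $t$ with the ordinary chain rule --- legitimate because only functions of $\widehat{\varphi}(0)$ appear --- and using $F'=1/f$, i.e.\ $(F^{-1})'=f\circ F^{-1}$, gives $\dot{\widehat{\varphi}}(t)=f(F^{-1}(F(\widehat{\varphi}(0))+t))=f(\widehat{\varphi}(t))$, which is \eqref{eq:eom_nl_phi}. For $\widehat{\pi}$ the $t=0$ value of the ratio $f(\widehat{\varphi}(0))/f(\widehat{\varphi}(t))$ is the identity, so the formula returns $\tfrac12(\widehat{\pi}(0)+\widehat{\pi}(0))=\widehat{\pi}(0)$.

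\emph{The $\widehat{\pi}$ equation.} Write the candidate as $\widehat{\pi}(t)=\tfrac12(\widehat{\pi}(0)g_t+g_t\widehat{\pi}(0))$ with $g_t:=f(\widehat{\varphi}(0))/f(\widehat{\varphi}(t))$ self-adjoint (here $f$ is real-valued) and set $h_t:=f'(\widehat{\varphi}(t))$. Only the denominator carries $t$-dependence; using $\dot{\widehat{\varphi}}(t)=f(\widehat{\varphi}(t))$ one obtains $\dot g_t=-g_t h_t$, hence $\dot{\widehat{\pi}}(t)=-\tfrac12(\widehat{\pi}(0)g_t h_t+g_t h_t\widehat{\pi}(0))$. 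Expanding the target $-\tfrac12(\widehat{\pi}(t)h_t+h_t\widehat{\pi}(t))$ and using $g_th_t=h_tg_t$, the identity to be checked collapses to $\widehat{\pi}(0)g_th_t+g_th_t\widehat{\pi}(0)=g_t\widehat{\pi}(0)h_t+h_t\widehat{\pi}(0)g_t$, which follows because $[\widehat{\pi}(0),g_t]$ is itself a function of $\widehat{\varphi}(0)$ and thus commutes with $h_t$: one has $\widehat{\pi}(0)g_th_t-g_t\widehat{\pi}(0)h_t=[\widehat{\pi}(0),g_t]h_t$ and $g_th_t\widehat{\pi}(0)-h_t\widehat{\pi}(0)g_t=-[\widehat{\pi}(0),g_t]h_t$, and the two cancel.

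The only genuinely delicate step is this last piece of operator-ordering bookkeeping in the $\widehat{\pi}$ equation: the Weyl-type symmetrization present in both $\widehat{H}_{x,a}$ and in the proposed $\widehat{\pi}(t)$ is precisely what makes the two sides agree once the commutator $[\widehat{\pi}(0),g_t]$ is moved past $h_t$. Everything else is the chain rule for functions of a single self-adjoint operator; I would not dwell on domains or self-adjointness of the unbounded operators, consistent with the rigor level of the rest of the paper.
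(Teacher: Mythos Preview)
Your argument is correct and follows the same strategy as the paper's proof: direct verification by differentiating the candidate solutions and checking they satisfy the Heisenberg equations with the right initial data. The paper's write-up is actually terser than yours: for the $\widehat{\varphi}$ equation it differentiates $F(\widehat{\varphi}(t))=F(\widehat{\varphi}(0))+t$ just as you do, and for the $\widehat{\pi}$ equation it differentiates only the first term $\widehat{\pi}(0)g_t$, obtains $-\widehat{\pi}(0)g_t h_t$, and then writes this as $-\widehat{\pi}(t)f'(\widehat{\varphi}(t))$ without further comment. Your additional bookkeeping --- showing that $\widehat{\pi}(0)g_th_t+\text{h.c.}$ coincides with $\widehat{\pi}(t)h_t+\text{h.c.}$ because $[\widehat{\pi}(0),g_t]$ is a function of $\widehat{\varphi}(0)$ and hence commutes with $h_t$ --- fills in precisely the ordering step the paper leaves implicit, so your version is a more careful rendition of the same proof rather than a different route.
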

\begin{proof}
We can check directly the formulas by differentiating w.r.t.~$t$ to show that the time evolved fields satisfy the equation of motions.
Rewriting the first equation as
$F(\widehat{\varphi}_{x,a}(t))
=
F(\widehat{\varphi}_{x,a}(0)) + t$ and differentiating the l.h.s.:
\begin{align}
\partial_t
F(\widehat{\varphi}_{x,a}(t))
=
F'(\widehat{\varphi}_{x,a}(t))
\dot{\widehat{\varphi}}_{x,a}(t)
=
\frac{\dot{\widehat{\varphi}}_{x,a}(t)}
{f(\widehat{\varphi}_{x,a}(t))}
\,,
\end{align}
which equals $\partial_t(F(\widehat{\varphi}_{x,a}(0)) + t)=1$, showing that $\widehat{\varphi}_{x,a}(t)$ satisfies  
\eqref{eq:eom_nl_phi}.
For the second equation we differentiate the first term in the parenthesis:
\begin{align}
&\widehat{\pi}_{x,a}(0)
f(\widehat{\varphi}_{x,a}(0))
\partial_t
\left(f(\widehat{\varphi}_{x,a}(t))\right)^{-1}\\
&=
-
\widehat{\pi}_{x,a}(0)
\frac{f(\widehat{\varphi}_{x,a}(0))}
{f(\widehat{\varphi}_{x,a}(t))}
\frac{f'(\widehat{\varphi}_{x,a}(t))}
{f(\widehat{\varphi}_{x,a}(t))}
\dot{\widehat{\varphi}}_{x,a}(t)\\
&=-
\widehat{\pi}_{x,a}(t)
f'(\widehat{\varphi}_{x,a}(t))
\,,
\end{align}
which shows that $\widehat{\pi}_{x,a}(t)$ satisfies \eqref{eq:eom_nl_pi}.
\end{proof}

\section{Proofs of Section \ref{sec:qft_repr}}
\label{sec:proofs_sec_qft_repr}

\begin{lemma}\label{lemma:push_forward_generic_app}
Let $\ket{\bm{m},C}$ be a Gaussian state and $\widehat{U}$ a unitary such that:
\begin{align}
    \label{eq:U_F_app}
    \widehat{U}^\dagger
    \widehat{\bm{\varphi}}
    \widehat{U}
    =
    F(\widehat{\bm{\varphi}})
    \,.
\end{align}
Then:
\begin{align}
    |\bra{\bm{\varphi}}
    \widehat{U}
    \ket{\bm{m}, C}
    |^2
    =
    \left(F \# {\cal GP}(\bm{m}^1, C^{11})\right)(\bm{\varphi})
\end{align}
where $f \# p$ denotes the push forward of $p$ under $f$.
\end{lemma}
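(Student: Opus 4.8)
The plan is to compute the probability density $|\langle \bm{\varphi}| \widehat{U} |\bm{m},C\rangle|^2$ by inserting a complete set of field eigenstates and using the fact that $\widehat{U}$ conjugates $\widehat{\bm{\varphi}}$ into $F(\widehat{\bm{\varphi}})$. Concretely, first I would observe that \eqref{eq:U_F_app} implies $\widehat{U}^\dagger g(\widehat{\bm{\varphi}}) \widehat{U} = g(F(\widehat{\bm{\varphi}}))$ for any (analytic) function $g$, since conjugation by a unitary is an algebra homomorphism and commutes with Taylor expansion. In particular, applying this to the spectral projector $g = \delta(\,\cdot\, - \bm{\varphi})$ of $\widehat{\bm{\varphi}}$, we get $\widehat{U}^\dagger \,|\bm{\varphi}\rangle\langle\bm{\varphi}|\, \widehat{U} = \delta(F(\widehat{\bm{\varphi}}) - \bm{\varphi})$. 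Then
\begin{align}
|\langle \bm{\varphi}| \widehat{U} |\bm{m},C\rangle|^2
&= \langle \bm{m},C| \widehat{U}^\dagger |\bm{\varphi}\rangle\langle\bm{\varphi}| \widehat{U} |\bm{m},C\rangle
= \langle \bm{m},C| \,\delta(F(\widehat{\bm{\varphi}}) - \bm{\varphi})\, |\bm{m},C\rangle \nonumber \\
&= \mathbb{E}_{\bm{\varphi}' \sim |\langle \bm{\varphi}'|\bm{m},C\rangle|^2}\big[\delta(F(\bm{\varphi}') - \bm{\varphi})\big],
\end{align}
where the last equality uses that $\delta(F(\widehat{\bm{\varphi}}) - \bm{\varphi})$ is diagonal in the $\widehat{\bm{\varphi}}$ basis, so its expectation value reduces to a classical expectation over the Born-rule distribution of $\widehat{\bm{\varphi}}$ in the state $|\bm{m},C\rangle$ (exactly the reduction noted in Section~\ref{sec:quantum}).

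\textbf{Identifying the distributions.} The next step is to identify $|\langle \bm{\varphi}'|\bm{m},C\rangle|^2$ as the Gaussian density ${\cal GP}(\bm{m}^1, C^{11})(\bm{\varphi}')$. The wavefunction $\langle \bm{\varphi}'|\bm{m},C\rangle$ is a (complex) Gaussian; its modulus squared is a real Gaussian whose mean is $\bra{\bm{m},C}\widehat{\bm{\varphi}}\ket{\bm{m},C} = \bm{m}^1$ and whose covariance is $\bra{\bm{m},C}\tfrac{1}{2}(\widehat{\varphi}_i\widehat{\varphi}_j + \widehat{\varphi}_j\widehat{\varphi}_i)\ket{\bm{m},C} - m^1_i m^1_j = \tfrac{1}{2}C^{11}_{ij}$, matching the conventions in \eqref{eq:mean}, \eqref{eq:cov}. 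Hence the Born distribution of the $\widehat{\bm{\varphi}}$ sector is precisely ${\cal GP}(\bm{m}^1, C^{11})$. Finally, recognizing that $\mathbb{E}_{\bm{\varphi}' \sim p}[\delta(F(\bm{\varphi}') - \bm{\varphi})] = (F \# p)(\bm{\varphi})$ is just the definition of the pushforward density completes the argument.

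\textbf{Main obstacle.} The step that requires the most care is the manipulation $\widehat{U}^\dagger |\bm{\varphi}\rangle\langle\bm{\varphi}| \widehat{U} = \delta(F(\widehat{\bm{\varphi}}) - \bm{\varphi})$ and the resulting change-of-variables, since these are formal identities involving delta functions of operators on a non-separable ``continuous'' Hilbert space. Rigorously one would regularize (e.g.\ smear $\delta$ against a test function, or discretize the field eigenbasis as the paper does for ${\cal X}$) and pass to the limit; alternatively one can argue at the level of characteristic functions, showing $\bra{\bm{m},C}\widehat{U}^\dagger e^{i\bm{k}^T\widehat{\bm{\varphi}}}\widehat{U}\ket{\bm{m},C} = \mathbb{E}_{\bm{\varphi}'\sim{\cal GP}(\bm{m}^1,C^{11})}[e^{i\bm{k}^T F(\bm{\varphi}')}]$, which is manifestly the characteristic function of $F\#{\cal GP}(\bm{m}^1,C^{11})$, and then invoke uniqueness of distributions with a given characteristic function. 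I expect the cleanest write-up to use this characteristic-function route, which sidesteps the delta-function technicalities while keeping everything a one-line consequence of \eqref{eq:U_F_app} and the classical reduction of diagonal expectation values.
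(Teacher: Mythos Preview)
Your proposal is correct and matches the paper's approach: the paper's primary proof is exactly the characteristic-function route you describe at the end (Fourier-representing the projector as $|\bm{\varphi}\rangle\langle\bm{\varphi}| = \int \tfrac{D\bm{\lambda}}{(2\pi)^{|\mathcal{X}|}}\,e^{i\bm{\lambda}^T(\widehat{\bm{\varphi}}-\bm{\varphi})}$, conjugating $e^{i\bm{\lambda}^T\widehat{\bm{\varphi}}}$ by $\widehat{U}$, inserting a resolution of identity, and reducing to the classical pushforward integral), and the paper also supplies an alternative proof via the eigenstate transformation $\widehat{U}|\bm{\varphi}\rangle = |\Det(\partial F/\partial\bm{\varphi})|^{1/2}|F(\bm{\varphi})\rangle$, which is the rigorous form of your direct projector-conjugation step. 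The identification $|\langle\bm{\varphi}|\bm{m},C\rangle|^2 = \mathcal{GP}(\bm{m}^1, C^{11})(\bm{\varphi})$ is simply asserted in the paper, just as you do.
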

\begin{proof}
First we have:
\begin{align}
    |\bra{\bm{\varphi}}
    \ket{\bm{m}, C}
    |^2
    =
    {\cal GP}(\bm{m}^1, C^{11})(\bm{\varphi})
    \,.
\end{align}
Now we note the representation of a projector as the average:
\begin{align}
    \ket{\bm{\varphi}}\bra{\bm{\varphi}}
    =
    \int\frac{D\bm{\lambda}}{(2\pi)^{|{\cal X}|}}
    \e^{i \bm{\lambda}^T (\widehat{\bm{\varphi}} - \bm{\varphi})}
    \,,
\end{align}
a formula which can be proved by comparing matrix elements of the two operators in arbitrary states.
Then we have:
\begin{align}
    &|\bra{\bm{\varphi}}
    \widehat{U}
    \ket{\bm{m}, C}
    |^2
    =
    \bra{\bm{m}, C}
    \widehat{U}^\dagger
    \ket{\bm{\varphi}}\bra{\bm{\varphi}}
    \widehat{U}
    \ket{\bm{m}, C}
    \\
    &=
    \int\frac{D\bm{\lambda}
    \e^{-i \bm{\lambda}^T \bm{\varphi}}
    }{(2\pi)^{|{\cal X}|}}
    \bra{\bm{m}, C}
    \widehat{U}^\dagger 
    \e^{i \bm{\lambda}^T \widehat{\bm{\varphi}}}
    \widehat{U}
    \ket{\bm{m}, C}\\
    &=
    \int\frac{D\bm{\lambda}
    \e^{-i \bm{\lambda}^T \bm{\varphi}}
    }{(2\pi)^{|{\cal X}|}}
    \bra{\bm{m}, C}
    \e^{i \bm{\lambda}^T F(\widehat{\bm{\varphi}})}
    \ket{\bm{m}, C}\\    
    &=
    \int D\bm{\varphi'}
    \int\frac{D\bm{\lambda}
    \e^{i \bm{\lambda}^T (F(\bm{\varphi}')-\bm{\varphi})}
    }{(2\pi)^{|{\cal X}|}}
    |\braket{\bm{\varphi'}}{\bm{m}, C}|^2
    \\
    &=
    \int D\bm{\varphi'}
    \delta(F(\bm{\varphi'})-\bm{\varphi})
    {\cal GP}(\bm{m}^1, C^{11})(\bm{\varphi'})
    \\
    &=
    \left(F \# {\cal GP}(\bm{m}^1, C^{11})\right)(\bm{\varphi})
    \,.
\end{align}
In going from the third to the fourth line we used the following representation, valid for any $f$:
\begin{align}
    f(\widehat{\bm{\varphi}})
    =
    \int D\bm{\varphi}
    f(\bm{\varphi})
    \ket{\bm{\varphi}}\bra{\bm{\varphi}}
    \,,
\end{align}
and we have used the definition of push forward of a distribution, i.e.~ that if $x \sim p_X$, then $p_Z = f\# p_X$ is the distribution of the random variable $z = f(x)$ which can be obtained explicitly via the change of variable formula:
\begin{align}
    p_Z(z)
    &=
    |\Det( \tfrac{\partial f(x)}{\partial x} |_{x = f^{-1}(z)} )
    |^{-1}
    p_X(f^{-1}(z))
    \\
    &=
    \int 
    \dd x'
    \delta(f(x') - z)
    p_X(x')
    \,,
\end{align}
where the second equality follows upon changing variables to $x''=f(x')$.

We also present an alternative proof the lemma, which relies on:
\begin{align}
    \label{eq:ux}
    \widehat{U}\ket{x} = 
    |\Det( \tfrac{\partial F(x)}{\partial x} ) |^{1/2}
    \ket{F(x)}
    \,.
\end{align}
The fact that $\widehat{U}\ket{x} = c(x) \ket{F(x)}$ for some $c(x)$ follows from
\begin{align}
    \widehat{\varphi} \widehat{U}\ket{x} 
    =
    \widehat{U}
    F(\widehat{\varphi}) \ket{x}
    =
    F(x)  \widehat{U}\ket{x}
    \,,
\end{align}
and the delta normalization fixes the proportionality factor:
\begin{align}
\delta(x'-x)
&=
\braket{x'}{x}
=
\bra{x'}\widehat{U}^\dagger \widehat{U}\ket{x}\\
&=
\overline{c(x')} c(x)
\delta(F(x')-F(x))
\,,
\end{align}
using that for a $g(x)$ having the unique solution $g(x)=0$ at $x_0$, one has
\begin{align}
    \delta(g(x))
    =
    \delta(x - x_0)
    |\Det( \tfrac{\partial g(x)}{\partial x}|_{x=x_0} ) |^{-1}
    \,.
\end{align}
Given equation \eqref{eq:ux}, the lemma is immediate:
\begin{align}
    &|\bra{\bm{\varphi}}
    \widehat{U}
    \ket{\bm{m}, C}
    |^2
    =
    |\bra{\bm{m}, C}
    \widehat{U}^\dagger
    \ket{\bm{\varphi}}
    |^2\\
    &=    
    |\Det( \tfrac{\partial F^{-1}(\bm{\varphi}) }{\partial \bm{\varphi}}) |
    \cdot
    |\bra{\bm{m}, C}
    \ket{F^{-1}(\bm{\varphi})}
    |^2\\
    &=
    |\Det( \tfrac{\partial F(x)}{\partial x} |_{x = F^{-1}(\bm{\varphi})} )
    |^{-1}
    {\cal GP}(\bm{m}^1, C^{11})(F^{-1}(\bm{\varphi}))
    \,,
\end{align}
which coincides with the definition of push forward.

\end{proof}

\section{Experiments}
\label{sec:experiments}

In this section we discuss numerical experiments to show that the semi-classical neural networks introduced in Sec.~\ref{sec:classical_simulation} performs on par with its classical counterpart for a simple irregular time series classification task.

\subsection{Datasets}

Similarly to \cite{li2015classification}, to perform controllable experiments with irregular data, we select the UCR time series dataset \cite{UCRArchive}.
We assume that the input time series is uniformly sampled in $[0,1]$ and then sample randomly a fraction $f$ of the input time series, where $f$ varies from $1$ (no subsampling) to $.1$ (only $10\%$ of the data is retained).
This gives us coordinates $x_i$ and values $y_i$ that can be used to prepare an input Gaussian state as discussed below in the training algorithm section.
We select the same random subsampling mask for the the training data and a different one for the test data, which we expect is a good model of a realistic irregular measurement and a more challenging machine learning task.

Since our algorithm has cubic complexity in number of points of the input signal and the purpose of the experiments is only to validate that the semiclassical architecture can perform on par with a classical probabilistic numeric baseline, we restrict the focus here only to the following dataset which has small number of data points:
SyntheticControl (train: $300$, test: $300$, classes: $6$, length time series $60$).

\subsection{Training Algorithm}

The training algorithm is composed of two steps. First, we train the GP using Alg.~\ref{alg:gp}. Then we use the resulting posterior mean and covariance -- more precisely the posterior mean $\mu'$, the cholesky decomposition of the posterior covariance $L$ and its inverse $M$ as inputs to the PNN and SPNN. The PNN discards $M$, which is used for the momenta sector, which is decoupled for PNN and does not intervene in the prediction.
Then we use $\mu', L, M$ to sample from a normal distribution that is defined as in Alg.~\ref{alg:sample}, i.e.~we set all the other means to zero and the covariance is $k'^{-1}=MM^T$ for the momenta sector and channel $0$ while identity elsewhere.
This returns a batch of samples $\bm{z}^i$ that are then passed through the neural network as in Alg.~\ref{alg:spnn} to finally be averaged to produce the logits used for classification.
The next section describes the architectures in more details.

\begin{algorithm}[tb]
   \caption{PretrainGP}
   \label{alg:gp}
\begin{algorithmic}
   \STATE {\bfseries Input:} data $\{x_i^b,y_i^b\}$ $b=1,\dots,B$ ($B$ being data size), grid ${\cal X}$, GP kernel parameters $\Theta$.
   \STATE {\bf Train GP:} 
   Compute $\Theta$ by maximizing marginal likelihood of data in mini-batches:
   \begin{align}
       \max_{\Theta}
       \frac{1}{B}\sum_{b=1}^B \log(p_{GP}(\bm{y}^b|\bm{x}^b, \Theta))
   \end{align}
   \STATE {\bf Test GP:} Compute
   $\mu_x'^b$, $k'^b_{x,x'}$ from Eq.~\eqref{eq:posterior} (main text) for $x,x'\in{\cal X}$.
   \STATE {\bf Cholesky:} Compute $L^b=\text{cholesky}(k'^b)$,
   $M^b=\text{cholesky}((k'^b)^{-1})$
   \STATE {\bf Output:} $\{\mu'^b \}$, $\{ L^b \}$, $\{ M^b \}$
\end{algorithmic}
\end{algorithm}

\begin{algorithm}[tb]
   \caption{Sample}
   \label{alg:sample}
\begin{algorithmic}
   \STATE {\bfseries Input:}$\mu'$, $L$, $M$ (output of Alg.~\ref{alg:gp}), number of channels $N_C$, number of sample $N_{\text{samples}}$.
   \STATE {\bfseries Create Normal:} 
   ${\cal N}(\bm{m}, C)$ with $\bm$: $N_c\times |{\cal X}|\times 2$ and 
   $C$: $(N_c\times |{\cal X}|\times 2) \times (N_c\times |{\cal X}|\times 2)$,
   \begin{align*}
       &m_{x,c,j}
       =
       \begin{cases}
       \mu'_x & c=j=0\\
       0 & \text{else}
       \end{cases}\\
       &C_{x,c,j;x',c',j'}
       =
       \begin{cases}
        k'_{x,x'} & j=j'=c=c'=0 \\
        (k'^{-1})_{x,x'} & j=j'=1, c\!=\!c'\!\!=\!0\\
        \delta_{x,x'}\delta_{c,c'}\delta_{j,j'}
        & \text{else}
       \end{cases}
   \end{align*}
   with $k'=LL^T$, $k'^{-1}=MM^T$:
   \STATE {\bfseries Sample:} 
   $\bm{z}^i$ from ${\cal N}(\bm{m}, C)$ using the reparametrization trick for $i=1,\dots,N_{\text{samples}}$
   \STATE {\bfseries Output:} $\{\bm{z}^i\}$
\end{algorithmic}
\end{algorithm}

\begin{algorithm}[tb]
   \caption{Semi-ClassicalNN}
   \label{alg:spnn}
\begin{algorithmic}
   \STATE {\bfseries Input:}$\mu'$, $L$, $M$ (output of Alg.~\ref{alg:gp}), number of channels $N_C$, $\beta$, number of sample $N_{\text{samples}}$, learnable parameters $h^{(\ell)}, b^{(\ell)}$
   \STATE {\bfseries Sample:} 
   $\bm{z}^i = Sample(\mu', L, M, N_C, N_{\text{samples}})$ (see Alg.~\ref{alg:sample}).
   \FOR{$\ell=0$ {\bfseries to} $L-1$}
   \STATE $\bm{z}^i = \text{Linear}(\bm{z}^i, h^{(\ell)}, b^{(\ell)})$
   \STATE $\bm{z}^i = \text{SympecticSofplus}(\bm{z}^i, \beta)$
   \ENDFOR
   \STATE $\bm{z}^i = \text{Linear}(\bm{z}^i, h^{(L)}, b^{(L)})$
   \STATE $l_c = N_{\text{samples}}^{-1}
   \sum_{i=1}^{N_{\text{samples}}}
   z^i_{x=0,c,j=0}$
   \STATE {\bfseries Output:} Logits $l_c$ 
\end{algorithmic}
\end{algorithm}

Note that GP inference is the bottleneck here and we could use the methods of \cite{li2016scalable} to improve on that. Our interest is however not to have a state of the art classifier, so we test the model in the simplest setting of $O(N^3)$ exact GP inference and simply validate that the performance of the semi-classical neural network architecture is on par with a classical counterpart.
Note that separating the GP training from the neural network training allows us to amortize the GP inference as a preprocessing step and have an $O(N^2)$ algorithm for network propagation as in BNN case.

\subsection{Architecture}

For the GP to interpolate the data we take Matern kernel with $\nu=0.5$ as implemented in gpytorch \cite{gpytorch}.
We compare three models: 1) a baseline MLP (BNN), 2) a probabilistic numeric MLP (PNN) and 3) a semiclassical version of the PNN (SPNN).
All models share the same basic architecture, which is a MLP with three layers. In all cases the hidden sizes are equal to the number of points in the original grid of the data, $|{\cal X}|$, times the number of channels $N_C$.
For BNN we parameterize the weight matrices directly i.e.~the learnable parameters are the weight matrices. Instead for PNN and SPNN we parametrize the logarithm of the weight matrix as the learnable parameters. We did check that for PNN this did not impact perform comparing to the case where we use the same parameterization of BNN.
We do this to ensure simply that the weight matrices of the SPNN are symplectic. Indeed recalling the definitions from the main text, see Sec.~\ref{sec:quantum}, a linear layer with input/output dimensions $2N$ has learnable parameters $A, B, C\in \mathbb{R}^{N\times N}$ and the weight matrix $S$ is then constructed as:
\begin{align}
    S = \exp
    \begin{pmatrix}
    A & \tfrac{1}{2}(B+B^T) \\
    \tfrac{1}{2}(C+C^T) & -A^T
    \end{pmatrix}
    \,.
\end{align}
$A, B, C$ are denoted $h$ in Alg.~\ref{alg:spnn}.
For PNN we simply take $B=C=0$ and discard the lower right block.
We have also tried to restrict $S$ to be a free matrix and use the parametrization from \cite{de2006symplectic} but found that the matrix exponential (which allow use to parametrize more general symplectic matrices) worked best.
We used pytorch \cite{pytorch} to run the experiments and initialized all the matrices $W$ using \verb| nn.init.kaiming_uniform_|, with mode fanin and nonlinearity 'relu'.
For SPNN we further multiplied the learnable parameters by a scale factor $0.1$ which we found important with the training settings described below to prevent large values at the beginning of training.

Finally, we took for non-linearity the softplus for BNN and PNN and its symplectic version of section \ref{lemma:s_softplus} for SPNN.
In all cases, $\beta=0.1$ was chosen. 
This ensures that in the SPNN we do not create very large values for large negative $\varphi$ that would make training unstable, and was chosen the same for all models for comparison. We checked that one could get similar results for BNN and PNN by using the more conventional value of $\beta=1$.

\subsection{Results}

We present results in table \ref{tab:results}.
We pretrained the GPs using Adam optimizer with learning rate $0.1$ from \cite{pytorch} for $20$ epochs using batch size of $50$.
We trained the neural network using default Adam optimizer for $1000$ epochs with batch size $50$ as well.
We used $N_{\text{samples}} = 100$ and $N_C=2$.

The BNN baseline as well as PNN and SPNN perform on the original unsampled dataset similarly to the MLP used in \cite{wang2016time} ($95$ for SyntheticControl), which also has three layers but additionally dropout and uses ReLU instead of softplus. This validates the choice of architecture we made.

The results presented are on the average over three random seeds for the random sampling. The error bars are considerable in all cases due to the fact that different sampling masks lead to different data points that are retained and this choice affects the classification of the signal.
When we start to subsample the data (using the same procedure for all models), both the PNN and SPNN n average perform better than the BNN, confirming the findings of \cite{li2015classification,li2016scalable,finzi2020probabilistic} showing the the usefulness of the GP model for irregularly sampled data. 
\cite{li2015classification} report accuracies averaged over all the 43 UCR datasets and not directly comparable with our restricted setting, while \cite{li2016scalable,finzi2020probabilistic} use different datasets for which scalable methods for GP inference are required.
Finally, we note that the performance of PNN and SPNN have a considerable overlap within the confidence region of the results, validating the claim that they perform similarly.

\begin{table}[t]
\caption{Classification accuracies for a baseline MLP (BNN), the classical probabilistic numeric network (PNN) and the semiclassical  network (SPNN). Sampling stands for the subsampling fraction $f$ of the input as explained in the main text. The results are obtained by computing statistics over three random subsampling masks.
}
\label{tab:results}
\vskip 0.15in
\begin{center}
\begin{small}
\begin{sc}
\begin{tabular}{cccc}
\toprule
Sampling & BNN & PNN & SPNN  \\
\midrule
1 & 94 & 94 & 95 \\
0.9 & 84.66 $\pm$ 0.47 & 93.33 $\pm$ 1.69 & 93.00 $\pm$ 2.82 \\
0.8 & 80.66 $\pm$ 3.68 & 91.00 $\pm$ 2.82 & 91.33 $\pm$ 2.05\\
0.7 & 77.33 $\pm$ 6.02 & 89.00 $\pm$ 2.16 & 86.66 $\pm$ 1.25\\
0.6 & 74.99 $\pm$ 7.87 & 82.33 $\pm$ 5.31 & 82.00 $\pm$ 2.94\\
0.5 & 68.33 $\pm$ 6.85 & 75.99 $\pm$ 8.49 & 74.33 $\pm$ 11.12\\
0.4 & 68.33 $\pm$ 5.44 & 72.66 $\pm$ 3.68 & 72.33 $\pm$ 4.11\\
0.3 & 66.33 $\pm$ 5.25 & 66.99 $\pm$ 4.55 & 66.66 $\pm$ 5.44\\
0.2 & 38.99 $\pm$ 8.04 & 42.33 $\pm$ 5.25 & 41.66 $\pm$ 4.78\\
0.1 & 28.33 $\pm$ 8.18 & 36.00 $\pm$ 5.35 & 40.33 $\pm$ 8.05\\
\bottomrule
\end{tabular}
\end{sc}
\end{small}
\end{center}
\vskip -0.1in
\end{table}

\section{Details of quantum optical implementation}

\subsection{State preparation strategies}
\label{sec:state_prep_app}

The state preparation step encodes the input signal ${\cal D}=\{(x_i, y_i)\}_{i=1}^N$ into quantum registers. 
There a several options for that. 
First we note that while the prescription of lemma \ref{lemma:qgp}, i.e.~to create a prior state and to project onto $y_i$ the registers associated to locations $x_i$, is appealing conceptually and allows us to formulate inference entirely in quantum language, it is not straightforward to implement. This is because projections are enacted by partial measurements and those give outcome $y_i$ only with a small probability of success. To overcome this, we consider the following alternatives.
The simplest is the one explained in the main text, that we repeat here. We use classical hardware to compute the GP posterior as in \eqref{eq:posterior} for a set of points $x\in {\cal X}$. Then we create a an input Gaussian state by acting on the vacuum state $\ket{\Omega}$ defined in Sec.~\ref{sec:hintons} with the linear layer $\widehat{D}(\bm{\xi}=(\bm{\mu}', 0))\widehat{\omega}(S=A\oplus (A)^{-1})$, where $A$ is a square root of $k'$.
The vacuum state can be created using lasers \cite{NielsenChuang} and we defer to the next section the implementation of the linear layer. 
While this procedure is straightforward it incurs a complexity similar to the classical GP inference, that is $O(N^3)$.
Another alternative is to use the fact that on quantum hardware we can perform matrix inversion exponentially faster \cite{Harrow_2009}. Under some conditions (in particular access to QRAM) one can compute posterior mean and covariance in polylog time \cite{zhao2019quantum, Das_2018}. Under similar conditions, quantum singular value decomposition can compute $A$ in polylog time \cite{kerenidis2016quantum}.
The catch in our setting however is that to read out the values of $\bm{\mu}'$ and $A$ we would still need $O(N^2)$ operations.
We leave developing a more efficient quantum implementation of quantum GP inference as in interesting future challenge.

\subsection{Linear layer}
\label{sec:linear_layer_impl_app}

The implementation of a unitary $\widehat{\omega}(S)$ on a quantum optical computer is a well studied problem and we limit here to give a high level description, see \cite{Weedbrook_2012} for more details. We first decompose the unitary in terms of elementary linear optical gates, we use the group homomorphism property $\widehat{\omega}(S)\widehat{\omega}(S')=\widehat{\omega}(SS')$ together with the Bloch-Messiah decomposition  
$S = K \Sigma L$ with $K,L$ symplectic and orthogonal and $\Sigma=\text{diag}(\e^{r_1},\dots,\e^{r_M},\e^{-r_1},\dots,\e^{-r_M})$. 
$\Sigma$ can be implemented directly using optical parametric amplifiers.
The orthogonal matrices $K,L$ can be further decomposed using Givens rotations as product of rotations of two components and implemented in terms of beamsplitters and phase shifters.
Note that the number of gates required to decompose an arbitrary matrix can grow quadratically with the dimension \cite{PhysRevLett.73.58}.
The bias $\widehat{D}$ can be also easily implemented \cite{Weedbrook_2012}.

\subsection{Non linearity}
\label{sec:non_linear_layer_impl_app}

Quantum computers can perform arbitrary computations if given a set of universal gates.
For quantum optical computers, one can take the quadratic Hamiltonians and the cubic gate, whose Hamiltonian is $\widehat{\varphi}^3$ \cite{Lloyd_1999}.
We can implement a non-linearity with Hamiltonian
\eqref{eq:hf} by approximating it with the truncation of the Taylor series of the function $f$ to order $k$:
\begin{align}
    \widehat{H}
    =
    \sum_{\ell=0}^k
    f_\ell 
    \widehat{H}_\ell
    \,,\quad
    \widehat{H}_\ell
    =
    \frac{1}{2}
    \left(
    \widehat{\pi} \widehat{\varphi}^\ell
    +
    \widehat{\varphi}^\ell\widehat{\pi} 
    \right)
    \,.
\end{align}
where for notational simplicity in this section we are going to omit the indices $x,a$ from the quantum fields.
We can then use the standard procedure for quantum simulation, see e.g.~\cite{NielsenChuang}.
A first step is to trade $\e^{i m \epsilon \widehat{H}}$ for the $m$ times application of $\prod_{\ell=0}^k \e^{i \epsilon f_\ell \widehat{H}_\ell}$, so that the problem boils down on how to implement $\e^{i \epsilon f_\ell \widehat{H}_\ell}$.
This is explained in the next proposition:
\begin{prop}
Denoted $\alpha_\ell = -1/(4(\ell+2))$, define the unitaries
\begin{align}
&\widehat{U}_{\ell,\epsilon} 
= \exp(i \epsilon \widehat{H}_\ell) 
\\   
&\widehat{W}_{\ell,\epsilon} 
=
\exp(i \epsilon \alpha_\ell \widehat{\pi}^2)
\,,\\
&\widehat{V}_{\ell,\epsilon} 
=
\exp(-i \epsilon \widehat{\varphi}^3)
\widehat{U}_{\ell,\epsilon}^\dagger
\exp(i \epsilon \widehat{\varphi}^3)
\widehat{U}_{\ell,\epsilon}
\,.
\end{align}
We have
\begin{align}
\widehat{U}_{\ell+1, \epsilon}
&=
\widehat{W}_{\ell,\sqrt{\epsilon}} 
\widehat{V}_{\ell, \epsilon^{1/4}}
\widehat{W}_{\ell,\sqrt{\epsilon}} ^\dagger
\widehat{V}_{\ell, \epsilon^{1/4}}^\dagger
+ O(\epsilon^{3/4})
\,.
\end{align}
\end{prop}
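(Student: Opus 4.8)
The plan is to prove the identity by the standard commutator-building trick: nested commutators of low-order generators produce higher-order ones, and group commutators of unitaries realize those nested commutators up to controlled error. First I would record the key commutator relations among the relevant Hamiltonians. Writing $\widehat{H}_\ell = \tfrac12(\widehat{\pi}\widehat{\varphi}^\ell + \widehat{\varphi}^\ell \widehat{\pi})$ and using the canonical relation $[\widehat{\pi},\widehat{\varphi}] = -i$, one computes
\begin{align}
[\widehat{\pi}^2, \widehat{\varphi}^3]
&=
c_1 \widehat{H}_4 + (\text{lower order})
\,,
\end{align}
and more generally $[\widehat{\pi}^2, \widehat{\varphi}^{\ell+2}]$ is proportional to $\widehat{H}_{\ell+1}$ modulo lower-degree terms; together with the intermediate step that $[\widehat{\varphi}^3, \widehat{H}_\ell]$ is what appears inside $\widehat{V}_{\ell,\epsilon}$, this is where the constant $\alpha_\ell = -1/(4(\ell+2))$ must be pinned down so that the leading terms combine into exactly $\widehat{H}_{\ell+1}$ with unit coefficient. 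I would carry out this bracket bookkeeping carefully, since the normalization of $\alpha_\ell$ is the one genuinely computational ingredient.

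Next I would expand each unitary to the order needed. Using $\e^{iA}\,\e^{iB}\,\e^{-iA}\,\e^{-iB} = \mathbf{1} - [A,B] + O(\|A\|^3, \|B\|^3, \ldots)$ (the group-commutator / Baker–Campbell–Hausdorff expansion of \eqref{eq:bch}), I would first analyze $\widehat{V}_{\ell,\delta} = \e^{-i\delta\widehat{\varphi}^3}\widehat{U}_{\ell,\delta}^\dagger \e^{i\delta\widehat{\varphi}^3}\widehat{U}_{\ell,\delta}$. Since $\widehat{U}_{\ell,\delta} = \e^{i\delta\widehat{H}_\ell}$, this is a group commutator of $\e^{-i\delta\widehat{\varphi}^3}$ and $\e^{-i\delta \widehat{H}_\ell}$, so $\widehat{V}_{\ell,\delta} = \exp\!\big(\delta^2[\widehat{\varphi}^3,\widehat{H}_\ell] + O(\delta^3)\big)$. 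Substituting $\delta = \epsilon^{1/4}$ gives $\widehat{V}_{\ell,\epsilon^{1/4}} = \exp\!\big(\sqrt{\epsilon}\,[\widehat{\varphi}^3,\widehat{H}_\ell] + O(\epsilon^{3/4})\big)$. Then I would compute the outer group commutator of $\widehat{W}_{\ell,\sqrt{\epsilon}} = \exp(i\sqrt{\epsilon}\,\alpha_\ell\widehat{\pi}^2)$ with $\widehat{V}_{\ell,\epsilon^{1/4}}$: each generator is $O(\sqrt{\epsilon})$, so their group commutator is $\exp\!\big([i\sqrt{\epsilon}\,\alpha_\ell\widehat{\pi}^2,\; \sqrt{\epsilon}\,[\widehat{\varphi}^3,\widehat{H}_\ell]] + O(\epsilon^{3/2})\big) = \exp\!\big(i\epsilon\,\alpha_\ell[\widehat{\pi}^2,[\widehat{\varphi}^3,\widehat{H}_\ell]] + \ldots\big)$. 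The remaining task is to show $\alpha_\ell[\widehat{\pi}^2,[\widehat{\varphi}^3,\widehat{H}_\ell]] = \widehat{H}_{\ell+1}$, which is exactly the bracket identity from the first step and fixes $\alpha_\ell$.

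The one subtlety I would be careful about is the error accounting: the claim is an $O(\epsilon^{3/4})$ error on $\widehat{U}_{\ell+1,\epsilon} = \exp(i\epsilon\widehat{H}_{\ell+1})$, which itself differs from $\mathbf{1}$ by $O(\epsilon)$; so the $O(\epsilon^{3/4})$ error is actually larger in absolute size than the intended term's higher-order corrections but smaller than the leading $O(\epsilon)$ piece — I need to track why the dominant error is $\epsilon^{3/4}$ rather than, say, $\epsilon$. This comes from the $O(\delta^3) = O(\epsilon^{3/4})$ term in the inner expansion of $\widehat{V}_{\ell,\epsilon^{1/4}}$, which is the worst-behaved contribution; all other corrections (from $\widehat{W}$, from the outer group commutator, from BCH cross terms) are $O(\epsilon^{3/2})$ or smaller and hence absorbed. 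The main obstacle is thus not any single deep idea but the disciplined two-layer error propagation — making sure that the $\delta = \epsilon^{1/4}$ and $\delta = \epsilon^{1/2}$ scalings are chosen precisely so the nested commutator lands at order $\epsilon$ while all unwanted terms are pushed to order $\epsilon^{3/4}$ or higher — combined with getting the commutator constant $\alpha_\ell$ right. I would finish by noting that the identity is stated for a single mode with indices $x,a$ suppressed, so no issue of operators on different modes arises, and conclude.
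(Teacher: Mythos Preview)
Your plan is exactly the paper's approach: compute the nested commutator $[\widehat{\pi}^2,[\widehat{\varphi}^3,\widehat{H}_\ell]]\propto\widehat{H}_{\ell+1}$, then realize it by applying the group-commutator identity $\e^{-i\sqrt{\epsilon}A}\e^{-i\sqrt{\epsilon}B}\e^{i\sqrt{\epsilon}A}\e^{i\sqrt{\epsilon}B}=\e^{-\epsilon[A,B]}+O(\epsilon^{3/2})$ twice (inner layer for $\widehat{V}$, outer layer for $\widehat{W}\widehat{V}\widehat{W}^\dagger\widehat{V}^\dagger$), with the scalings $\epsilon^{1/4}$ and $\sqrt{\epsilon}$ chosen so the two layers compose to order~$\epsilon$; your error accounting is in fact more explicit than the paper's. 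Two small slips to fix when you write it out: your warm-up $[\widehat{\pi}^2,\widehat{\varphi}^3]$ is proportional to $\widehat{H}_2$ (not $\widehat{H}_4$) with no lower-order remainder, and the inner group commutator gives $\exp(-\delta^2[\widehat{\varphi}^3,\widehat{H}_\ell]+O(\delta^3))$ with a minus sign --- tracking these signs carefully is what pins down $\alpha_\ell$.
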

\begin{proof}
Consider the Trotter formula:
\begin{align*}
    &
    [\e^{i\epsilon \widehat{H}_1}\e^{i\epsilon \widehat{H}_2} \cdots \e^{i\epsilon \widehat{H}_k}]
    [\e^{i\epsilon \widehat{H}_k}\e^{i\epsilon \widehat{H}_{k-1}} \cdots \e^{i\epsilon \widehat{H}_1}]\\
    &=
    \e^{2 i \epsilon \widehat{H}}
    +
    O(\epsilon^3)
    \,.
\end{align*}
The error in replacing $\e^{i \widehat{H}}$ with the $m$-th power of the l.h.s.~is smaller than $\alpha m \epsilon^3$ for some constant $\alpha$ (see Eq.~4.107 of \citep{NielsenChuang}).
So we can concentrate on implementing $\e^{i\epsilon H_\ell}$. Suppose that we know how to implement $\e^{i\epsilon H_{\ell}}$. Then we show how to construct $\e^{i\epsilon H_{\ell+1}}$ using the universal cubic and linear gates. Note that
\begin{align*}
    &[\widehat{\varphi}^3, \widehat{H_{\ell}} ]
    =
    f_\ell
    [\widehat{\varphi}^3,
    \widehat{\pi} \widehat{\varphi}^\ell 
    ]
    =
    f_\ell \widehat{\varphi}^\ell
    [\widehat{\varphi}^3,
    \widehat{\pi} 
    ]
    =
    2i f_\ell
    \widehat{\varphi}^{\ell}
    \,,\\
    &[\widehat{\pi}^2, [\widehat{\varphi}^3, \widehat{H}_{\ell} ]]
    =
    2if_\ell
    [\widehat{\pi}^2, \widehat{\varphi}^{\ell+2}]
    \\
    &=
    2
    f_\ell(\ell+2)
    (\widehat{\pi}
    \widehat{\varphi}^{\ell+1}
    +
    \widehat{\varphi}^{\ell+1}
    \widehat{\pi})\\
    &=
    4\frac{f_\ell}{f_{\ell+1}}(\ell+2)
    \widehat{H}_{\ell+1}
    \,.
\end{align*}
Now if we know how to implement two gate $\e^{i\epsilon A}, \e^{i\epsilon B}$, we also know how to implement the gate with Hamiltonian given by their commutator:
\begin{align*}
    &\e^{-i\sqrt{\epsilon} \widehat{A}}\e^{-i\sqrt{\epsilon} \widehat{B}}
    \e^{+i\sqrt{\epsilon} \widehat{A}}\e^{+i\sqrt{\epsilon} \widehat{B}}\\
    &=
    \e^{-i\sqrt{\epsilon} (\widehat{A}+\widehat{B}) - \frac{1}{2}\epsilon [\widehat{A},\widehat{B}]}
    \e^{+i\sqrt{\epsilon} (\widehat{A}+\widehat{B}) - \frac{1}{2}\epsilon [\widehat{A},\widehat{B}]}
    +
    O(\epsilon^{3/2})\\
    &=
    \e^{-\epsilon [\widehat{A},\widehat{B}]}
    +
    O(\epsilon^{3/2})
    \,.
\end{align*}
So if we know how to implement $\e^{i\epsilon H_\ell}$, we can implement $\e^{i\epsilon H_{\ell+1}}$ using the fact that $\widehat{\pi}^2$ and $\widehat{\varphi}^3$ are universal gates and using the above formula two times.
Since we know how to implement $H_{\ell=0}=\widehat{\pi}$, we can implement all the higher Hamiltonians recursively. 
\end{proof}

The number $N_{\widehat{H}}$ of universal gates to implement a gate with Hamiltonian $\widehat{H}$ satisfies the recursion:
\begin{align}
N_{\widehat{H}_{\ell+1}} = 2N_{\widehat{\pi}^2} + 2(2N_{\widehat{\varphi}^3} + 2 N_{\widehat{H}_{\ell}})
\,,
\end{align}
where setting $N_{\widehat{\pi}^2}=N_{\widehat{\varphi}^2}=N_{\widehat{H}_{0}}=1$ we get the recursion $N_{\widehat{H}_{\ell+1}} = 6 + 4 N_{\widehat{H}_{\ell}}, N_{\widehat{H}_{0}}=1$ whose solution is exponential in $\ell$, and therefore in $k$, the truncation parameter of the non-linearity. 
Depending on the hardware, a low value of $k$ might be required with the proposed procedure.

To understand what function the approximated gates implement, let us consider:
\begin{align*}
    &h_{\ell,\alpha}(\widehat{\varphi})
    :=
    \e^{i \frac{\alpha}{2} (\widehat{\pi} \widehat{\varphi}^\ell + \widehat{\varphi}^\ell \widehat{\pi})}
    \widehat{\varphi}
    \e^{-i \frac{\alpha}{2} (\widehat{\pi} \widehat{\varphi}^\ell + \widehat{\varphi}^\ell \widehat{\pi})}\\
    &=
    \widehat{\varphi}
    +
    \alpha \widehat{\varphi}^\ell
    +
    \frac{\alpha^2}{2}
    \ell \widehat{\varphi}^{\ell-1} \widehat{\varphi}^{\ell}
    +
    \frac{\alpha^3}{3!}
    \ell(2\ell-1) \widehat{\varphi}^{2\ell-2} \widehat{\varphi}^{\ell}
    +\dots
    \\
    &=
    \widehat{\varphi}
    \sum_{j\ge 0}
    \frac{\alpha^j}{j!}
    \ell(2\ell-1)(3\ell-2) \dots ((j-1)\ell - (j-2))
    \widehat{\varphi}^{j (\ell -1)}
    \,,\\
    &h_{1,\alpha}(\widehat{\varphi}) = \e^{\alpha} \widehat{\varphi}\\
    &h_{2,\alpha}(\widehat{\varphi}) = \frac{\widehat{\varphi}}{1-\alpha \widehat{\varphi}}\\
    &h_{3,\alpha}(\widehat{\varphi}) = \frac{\widehat{\varphi}}{\sqrt{1-2 \alpha \widehat{\varphi}^2}}\\
    &\dots
    \,.
\end{align*}
So consider the Hamiltonian of softplus by setting
\begin{align}
    \alpha_\ell = \epsilon \frac{(-1)^\ell}{\ell!}
\end{align}
and truncating to order $k$ and using Trotter, the function implemented by our procedure is:
\begin{align}
&[\e^{i\epsilon H_1}\cdots \e^{i\epsilon H_k}]
[\e^{i\epsilon H_k} \cdots \e^{i\epsilon H_1}]\cdot\\
&\widehat{\varphi}\cdot
[\e^{-i\epsilon H_1}\cdots \e^{-i\epsilon H_k}]
[\e^{-i\epsilon H_k} \cdots \e^{-i\epsilon H_1}]\\
&=
h_{1,\alpha_1} \circ \cdots \circ
h_{k,\alpha_k} \circ h_{k,\alpha_k} \circ 
\cdots \circ h_{1,\alpha_1}
(\widehat{\varphi})
\,.
\end{align}
Define the $m$-th power of this function by $\sigma_{k,\epsilon, m}(\widehat{\varphi})$.
Figure \ref{fig:sigma_trunc} shows these non-linearity against the original softplus function for different values of the parameters showing the effect of the choice of $m$ and $\epsilon$ for $k=3$.

\begin{figure}[ht]
\begin{center}
\centerline{\includegraphics[width=\columnwidth]{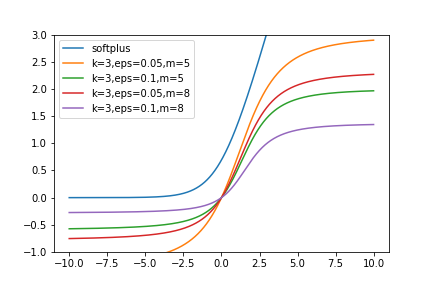}}
\caption{Truncated non-linearity $\sigma_{k,\epsilon,m}(x)$}
\label{fig:sigma_trunc}
\end{center}
\vskip -0.2in
\end{figure}

\end{document}